\documentclass[12pt, a4paper]{article}
\usepackage[utf8]{inputenc}
\usepackage{dcolumn,lscape}
\usepackage{amsmath,longtable,multicol,dcolumn,tabularx,graphicx,amssymb}
\usepackage{exscale,amsthm,multirow,rotating}
\usepackage{natbib}
\usepackage{bbm,listings}
\usepackage{endfloat}
\usepackage{tikz,dsfont,float}
\usepackage[caption = false]{subfig}
\textwidth 17.cm \textheight 23.6cm \topmargin -.77in
\evensidemargin 0.cm \oddsidemargin 0.cm
\parskip 0.5ex plus 0.01ex minus 0.01ex
\parindent 1.0cm

\setlength{\unitlength}{1cm}

\newcommand{\xvec}{\boldsymbol}
\newcommand{\xmat}{\mathbf}
\newcommand{\xset}{\mathds}

\newtheorem{theorem}{Theorem}

\newtheorem{corollary}{Corollary}
\newtheorem{lemma}{Lemma}

\bibliographystyle{apalike}

%
%
%
%

\newcommand{\pkg}[1]{\textbf{#1}}
\newcommand{\code}[1]{\texttt{#1}}
\newcommand{\class}[1]{`\code{#1}'}
\newcommand{\fct}[1]{\code{#1()}}
\newcommand{\proglang}[1]{\textsf{\textbf{#1}}}

\definecolor{lgrey}{rgb}{0.8,0.8,0.8}

\author{Philipp Otto\\Leibniz University Hannover, Germany}

\title{spGARCH: An R-Package for Spatial and Spatiotemporal ARCH models}

\begin{document}

\lstdefinelanguage{Renhanced}%
  {keywords={},%
   otherkeywords={},%
   alsoother={._$},%
   sensitive,%
   morecomment=[l]\#,%
   morestring=[d]",%
   morestring=[d]'
  }%
\lstset{language=Renhanced, numbers=left, numberstyle=\tiny\color{lgrey}, basicstyle=\small\ttfamily, keywordstyle=\small,
showstringspaces=false, tabsize=2, commentstyle=\color{code_comment}, extendedchars=true, breaklines=true, frame=single, morecomment=[l][\color{red}]{R> },
escapeinside={\%*}{*)}}

\maketitle

\begin{abstract}
 In this paper, a general overview on spatial and spatiotemporal ARCH models is provided. In particular, we distinguish between three different spatial ARCH-type models. In addition to the original definition of \cite{Otto16_arxiv}, we introduce an exponential spatial ARCH model in this paper. For this new model, maximum-likelihood estimators for the parameters are proposed. In addition, we consider a new complex-valued definition of the spatial ARCH process. From a practical point of view, the use of the \proglang{R}-package \pkg{spGARCH} is demonstrated. To be precise, we show how the proposed spatial ARCH models can be simulated and summarize the variety of spatial models, which can be estimated by the estimation functions provided in the package. Eventually, we apply all procedures to a real-data example.\\
 \emph{Keywords:} Spatial ARCH model, exponential ARCH model, R, SARspARCH model, spatiotemporal statistics, variance clusters
\end{abstract}



\section[Introduction]{Introduction} \label{sec:intro}

Whereas autoregressive conditional heteroscedasticity (ARCH) models are applied widely in time series analysis, especially in financial econometrics, spatial conditional heteroscedasticity has not been seen as critical issue in spatial econometrics up to now. Although it is well-known that classical least squares estimators are biased for spatially correlated data as well as for spatial data with an inhomogeneous variance across space, there are just a few papers proposing statistical models accounting for spatial conditional heteroscedasticity in terms of the ARCH and GARCH models of \cite{Engle82} and \cite{Bollerslev86}. The first extensions to spatial models attempted were time series models incorporating spatial effects in temporal lags (see \citealt{Borovkova12} and \citealt{Caporin06}, for instance). Instantaneous spatial autoregressive dependence in the conditional second moments, i.e., the conditional variance in each spatial location is influenced by the variance nearby, has been introduced by \cite{Otto16_arxiv,Otto18_spARCH,Otto18_tech}. Their models allow for these instantaneous effects but require certain regularity conditions. In this paper, we propose an alternative specification of spatial autoregressive conditional heteroscedasticity based on an exponential definition of the conditional variance. This new model can be seen as the spatial equivalent of the exponential GARCH model by \cite{Nelson91}. Other recent papers propose a mixture of these two approaches (see \citealt{Sato17,Sato18b}). Moreover, all these models can be used in spatiotemporal settings (see \citealt{Otto16_arxiv,Sato18a}).

In addition to the novel spatial exponential ARCH model, this paper demonstrates the use of the \proglang{R}-package \pkg{spGARCH}. From this practical point of view, the simulation of several spatial ARCH-type models as well as the estimation of a variety of spatial models with conditional heteroscedasticity are shown. There are several packages implementing geostatistical models, kriging approaches, and other spatial models (cf. \citealt{Cressie93,Cressie11}). One of the most powerful packages used to deal with models of spatial dependence is \pkg{spdep}, written by \cite{spdep}. It implements most spatial models in a user-friendly way, such as spatial autoregressive models, spatial lag models, and so forth (see, also, \citealt{Elhorst10} for an overview). These models are typically called spatial econometrics models, although they are not tied to applications in economics. In contrast, the package \pkg{gstat} provides functions for geostatistical models, variogram estimation, and various kriging approaches (see \citealt{gstat} for details). For dealing with big geospatial data, the \pkg{Stem} package uses an expectation-maximization (EM) algorithm for fitting hierarchical spatiotemporal models (see \citealt{Stem} for details). For a distributed computing environment, the \proglang{MATLAB} software \pkg{D-STEM} from \cite{Finazzi14} also provides powerful tools for dealing with heterogeneous spatial supports, large multivariate data sets, and heterogeneous spatial sampling networks. Additionally, these fitted models are suitable for spatial imputation. Contrary to these EM approaches, Bayesian methods for modeling spatial data are implemented in the \proglang{R}-\pkg{INLA} package (see \citealt{inla} for technical details of the integrated nested Laplace approximations and \citealt{inla2} for recently implemented features). Along with this package, the \proglang{R}-\pkg{INLA} project provides several functions for diverse spatial models incorporating integrated nested Laplace approximations.

In contrast to the above mentioned software for spatial models, the prevalent \proglang{R}-package for time series GARCH-type models is \pkg{rugarch} from  \cite{rugarch}. Since \pkg{spGARCH} has been developed mainly to deal with spatial data, we aim to provide a package which is user-friendly for researchers and data scientists working in applied spatial science. Thus, the package is coordinated with the objects and ideas of \proglang{R} packages for spatial data rather than packages for dealing with time series.

We structure the paper as follows. In the next Section \ref{sec:models}, we discuss all covered spatial and spatiotemporal ARCH-type models. In addition, we introduce a novel exponential spatial ARCH model, which has weaker regularity conditions than the other spatial ARCH models. In the subsequent section, parameter estimation based on the maximum-likelihood principle is discussed for both the previously proposed spatial ARCH models as well as the new exponential spatial ARCH model. After these theoretical sections, we demonstrate the use of the \proglang{R}-package \pkg{spGARCH} in Section \ref{sec:spGARCH_package}. Further, we fit a spatial autoregressive model with exogenous regressors and spatial ARCH residuals for a real-world data set. In particular, we analyze prostate cancer incidence rates in southeastern U.S. states. Section \ref{sec:summary} concludes the paper.



\section{Spatial ARCH-type models} \label{sec:models}

Let   $\left\{Y(\xvec{s}) \in \xset{R}: \xvec{s} \in D \right\}$ be a univariate stochastic process having a spatial autoregressive structure in the conditional variance. The process is defined in a multidimensional space $D$, which is typically a subset of the $q$-dimensional real numbers $\xset{R}^q$, as space is usually finite. For dealing with spatial lattice data, $D$ is subset of the $q$-dimensional integers $\xset{Z}^q$. For both cases, it is important that the subset contains a $q$-dimensional rectangle of positive volume (cf. \citealt{Cressie11}). Moreover, this definition is suitable for modeling spatiotemporal data, as one might assume that $D$ is the product set $\xset{R}^{k} \times \xset{Z}^{l}$ with $k + l = d$.

To define spatial models, in particular areal spatial models such as the simultaneous autoregressive (SAR) models, it is convenient to consider a vector of observations \linebreak $\xvec{Y} = (Y(\xvec{s}_1), \ldots, Y(\xvec{s}_n))^\prime$ at all locations $\xvec{s}_1, \ldots, \xvec{s}_n$. For spatial ARCH models, we specify this vector as
\begin{equation}\label{eq:spARCHy}
\xvec{Y} = \mathrm{diag}(\xvec{h})^{1/2}\xvec{\varepsilon} \, ,
\end{equation}
an analogue to the well-known time series ARCH models (cf. \citealt{Engle82,Bollerslev86}). However, note that the vector $\xvec{h}$ does not necessarily coincide with the conditional variance
\begin{equation*}
Var(Y(\xvec{s}_i) | Y(\xvec{s}_1), \ldots, Y(\xvec{s}_{i-1})) \, ,
\end{equation*}
as the variance in any location $\xvec{s}_j$ also depends on $Y(\xvec{s}_i)$ for $j \neq i$ (see \citealt{Otto16_arxiv} for details).
We now distinguish between several spatial ARCH-type models via the definition of $\xvec{h}$.

\subsection{Spatial ARCH model}\label{sec:spARCH}

First, we define this vector $\xvec{h}$ in such a way as to be analogous to the definition in \cite{Otto16_arxiv,Otto18_spARCH}. For this model, the vector $\xvec{h}_O$ is given by
\begin{equation}\label{eq:h_spARCH}
\xvec{h}_O = \alpha\xvec{1} + \rho\xmat{W} \mathrm{diag}(\xvec{Y}) \xvec{Y} \, ,
\end{equation}
where $\mathrm{diag}(\xvec{a})$ is a diagonal matrix with the entries of $\xvec{a}$ on the diagonal. In order to be consistent with the implementation in the \proglang{R}-package \pkg{spGARCH}, we focus on the special case with two parameters $\alpha$ and $\rho$, whereas \cite{Otto16_arxiv} proposed a more general model with a vector $\xvec{\alpha} = (\alpha_1, \ldots, \alpha_n)'$ and the first-order spatial lag $\xmat{W}\mathrm{diag}(\xvec{Y}) \xvec{Y}$.

For this definition, there is a one-to-one relation between $\xvec{Y}$ and $\xvec{\varepsilon}$ via the squared observations $\xvec{Y}^{(2)} = (Y(\xvec{s}_1)^2, \ldots, Y(\xvec{s}_n)^2)^\prime$ and squared errors $\xvec{\varepsilon}^{(2)} = (\varepsilon(\xvec{s}_1)^2, \ldots, \varepsilon(\xvec{s}_n)^2)^\prime$ with
\begin{equation}\label{eq:Y2}
\xvec{Y}^{(2)} = \alpha \, \left(\xmat{I} - \xmat{A} \right)^{-1} \xvec{\varepsilon}^{(2)}\, ,
\end{equation}
where $\xmat{W}$ is a predefined spatial weighting matrix and
\begin{equation*}\label{eq:A}
\xmat{A} = \rho \, \mathrm{diag}\left(\varepsilon(\xvec{s}_1)^2, \ldots, \varepsilon(\xvec{s}_n)^2\right) \xmat{W} \, .
\end{equation*}
Thus,
\begin{equation*}
\xvec{h}_O = \alpha\xvec{1} + \rho \alpha \xmat{W} \left(\xmat{I} - \xmat{A} \right)^{-1} \xvec{\varepsilon}^{(2)} \, .
\end{equation*}


It is important to assume that the spatial weighting matrix is a non-stochastic, positive matrix with zeros on the main diagonal to ensure that a location is not influenced by itself (cf. \citealt{Elhorst10,Cressie11}). The vector of random errors is denoted by $\xvec{\varepsilon}$. Due to the complex dependence implied by the weighting matrix $\xmat{W}$, $\xvec{h}_O$ is not necessarily positive; thus, $\mathrm{diag}(\xvec{h})^{1/2}$ does not necessarily have a solution in the real numbers such that the process in \eqref{eq:spARCHy} is well-defined. This is only the case if the condition of the following lemma is fulfilled.

\begin{lemma}[\citealt{Otto16_arxiv}]\label{th:spARCH}
Suppose that $\alpha \geq 0$, $\rho \geq 0$ and that $\det(\xmat{I} - \xmat{A}^2) \neq 0$.
If all elements of the matrix
\begin{equation}
(\xmat{I} - \xmat{A}^2)^{-1} \label{eq:inverse_I-A}
\end{equation}
are nonnegative, then all components of $\xvec{Y}^{(2)}$ are nonnegative, i.e., $Y(\xvec{s}_i)^2 \ge 0$ for $i=1,\ldots,n$. Moreover, $h_O(\xvec{s}_i) \ge 0$ for $i=1,\ldots,n$.
\end{lemma}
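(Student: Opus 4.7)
The plan is to exploit the algebraic identity $(\mathbf{I} - \mathbf{A})(\mathbf{I} + \mathbf{A}) = \mathbf{I} - \mathbf{A}^2$ to rewrite the inverse appearing in \eqref{eq:Y2} in a form whose nonnegativity is transparent. Since $\det(\mathbf{I} - \mathbf{A}^2) \neq 0$ is assumed, both $\mathbf{I} - \mathbf{A}$ and $\mathbf{I} + \mathbf{A}$ are invertible, and so
\begin{equation*}
(\mathbf{I} - \mathbf{A})^{-1} = (\mathbf{I} + \mathbf{A})(\mathbf{I} - \mathbf{A}^2)^{-1}.
\end{equation*}
Substituting into \eqref{eq:Y2}, I would obtain
\begin{equation*}
\mathbf{Y}^{(2)} = \alpha\,(\mathbf{I} + \mathbf{A})(\mathbf{I} - \mathbf{A}^2)^{-1}\boldsymbol{\varepsilon}^{(2)}.
\end{equation*}

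The next step is to examine sign patterns factor by factor. The vector $\boldsymbol{\varepsilon}^{(2)}$ is componentwise nonnegative because its entries are squares of real numbers. By the hypothesis of the lemma, the matrix $(\mathbf{I} - \mathbf{A}^2)^{-1}$ has only nonnegative entries, so the product $(\mathbf{I} - \mathbf{A}^2)^{-1}\boldsymbol{\varepsilon}^{(2)}$ is a nonnegative vector. Finally, the matrix $\mathbf{A} = \rho\,\mathrm{diag}(\varepsilon(\mathbf{s}_1)^2,\ldots,\varepsilon(\mathbf{s}_n)^2)\mathbf{W}$ is a product of a nonnegative scalar $\rho$, a nonnegative diagonal matrix, and the componentwise nonnegative weight matrix $\mathbf{W}$; therefore $\mathbf{A}$, and hence $\mathbf{I} + \mathbf{A}$, is entrywise nonnegative. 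Multiplying a nonnegative matrix by a nonnegative vector and then by the scalar $\alpha \geq 0$ yields a nonnegative vector, which establishes that every component of $\mathbf{Y}^{(2)}$ is nonnegative.

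For the \emph{moreover} statement on $\mathbf{h}_O$, I would simply rewrite the defining equation \eqref{eq:h_spARCH} using $\mathrm{diag}(\mathbf{Y})\mathbf{Y} = \mathbf{Y}^{(2)}$, which gives
\begin{equation*}
\mathbf{h}_O = \alpha\mathbf{1} + \rho\,\mathbf{W}\,\mathbf{Y}^{(2)}.
\end{equation*}
Having just shown $\mathbf{Y}^{(2)} \geq \mathbf{0}$ componentwise, and with $\rho \geq 0$ and $\mathbf{W}$ nonnegative, the second summand is entrywise nonnegative; adding $\alpha\mathbf{1}$ with $\alpha \geq 0$ yields $h_O(\mathbf{s}_i) \geq \alpha \geq 0$ for all $i$.

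The only conceptual hurdle here is spotting the factorization $(\mathbf{I}-\mathbf{A})^{-1} = (\mathbf{I}+\mathbf{A})(\mathbf{I}-\mathbf{A}^2)^{-1}$, which converts a nonnegativity assumption on $(\mathbf{I}-\mathbf{A}^2)^{-1}$ (a natural Neumann-style condition, since $\mathbf{A}^2$ is nonnegative whenever $\mathbf{A}$ is) into the desired nonnegativity of $\mathbf{Y}^{(2)}$ without having to assume that $(\mathbf{I}-\mathbf{A})^{-1}$ itself is nonnegative. Once this identity is in hand, everything reduces to bookkeeping of signs, and no probabilistic or analytic argument beyond the hypotheses on $\alpha$, $\rho$, and $\mathbf{W}$ is required.
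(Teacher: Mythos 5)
Your proof is correct: since $\det(\xmat{I}-\xmat{A}^2)=\det(\xmat{I}-\xmat{A})\det(\xmat{I}+\xmat{A})\neq 0$, the factorization $(\xmat{I}-\xmat{A})^{-1}=(\xmat{I}+\xmat{A})(\xmat{I}-\xmat{A}^2)^{-1}$ is legitimate, and with $\alpha\geq 0$, $\rho\geq 0$, $\xmat{W}$ and $\mathrm{diag}(\xvec{\varepsilon}^{(2)})$ entrywise nonnegative and $(\xmat{I}-\xmat{A}^2)^{-1}$ nonnegative by hypothesis, the sign bookkeeping for $\xvec{Y}^{(2)}=\alpha(\xmat{I}+\xmat{A})(\xmat{I}-\xmat{A}^2)^{-1}\xvec{\varepsilon}^{(2)}$ and then for $\xvec{h}_O=\alpha\xvec{1}+\rho\xmat{W}\xvec{Y}^{(2)}$ goes through exactly as you state. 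Note that this paper does not actually reproduce a proof of the lemma (it is quoted from the cited reference, and the appendix proves only Theorem \ref{th:EspARCH} and Corollary \ref{cor:EspARCH}); your factorization argument is the standard route that converts the hypothesis on $(\xmat{I}-\xmat{A}^2)^{-1}$ into nonnegativity of $(\xmat{I}-\xmat{A})^{-1}\xvec{\varepsilon}^{(2)}$, so it supplies a correct, self-contained proof consistent with the source.
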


It is important to note that $\xmat{A}$ depends on both the weighting matrix and the realizations of the errors. In order to ensure that this condition is fulfilled, \cite{Otto16_arxiv} propose to truncate the support of the error distribution on the interval $(-a, a)$ with
\begin{equation*}
  a = \left\{
  \begin{array}{cc}
  \infty                                       &  \exists k > 0 : \rho\xmat{W}^k = \xmat{0} \\ 
  {1}/{\sqrt[4]{ \rho^2||\xmat{W}^2 ||_1}}     &  \rho^2 || \xmat{W}^2 ||_1 > 0
  \end{array} \right. \, ,
\end{equation*}
where $|| \cdot ||_1$ denotes the matrix norm based on the $l_1$ vector norm.

There are two cases in which the support of the errors does not need to be constrained. If $\rho = 0$, the process coincides with a spatial white noise process such that $a$ equals $\infty$. Moreover, all entries of $\xvec{h}$ are non-negative if $\xmat{W}$ is similar to a strictly triangular matrix. Then, $\xmat{W}$ is nilpotent. This case covers the classical time-series ARCH($p$) models introduced by \cite{Engle82} as well as the so-called oriented spARCH processes. For these processes, the spatial dependence has a certain direction, e.g., observations are only influenced by observations in a southward direction or by observations which are closer to an arbitrarily chosen center. The setting also covers recent time-series GARCH models incorporating spatial information (e.g., \citealt{Borovkova12,Caporin06}).

Of course, the truncated support of the errors has an impact on the extent of the spatial dependence on the conditional variances. Obviously, the support need not be constrained regarding $\rho = 0$. However, this support decreases with increasing values of $\rho$. For instance, if $\rho = 1$, then the parameter $a$ is equal to $0.968$ for Rook's contiguity matrices on a two-dimensional lattice. As a measure of the spatial dependence of the variance, one might consider Moran's $I$ for the squared observations (see \citealt{Moran50}). Moreover, we observe that the growth rate of $I$ decreases with increasing spatial weights. This trend can be explained by the compact support of the errors. Since there cannot be large variations $\varepsilon(\xvec{s}_i)$ in absolute terms, there also cannot be large spatial clusters of high or low variance. To illustrate this behavior, Figure \ref{fig:MoransI} depicts Moran's $I$ for simulated observations $\xvec{Y}$ and their squares for $\rho \in \{0,0.05, \ldots, 2\}$. For the Monte Carlo simulation study, we simulate $n = 400$ observation on a two-dimensional lattice $D = \{\xvec{s} = (s_1, s_2)' \in \xset{Z}^2: 0 \leq s_1, s_2 \leq 20\}$. The weighting matrix is a common Rook's contiguity matrix, and the simulation is done for $10^5$ replications. Although the exact distribution of Moran's statistic is bounded, the standardized statistic is asymptotically normally distributed for the ``majority of spatial structures'' (\citealt{Tiefelsdorf95}, see also \citealt{Cliff81}). Thus, the asymptotic 95\% confidence intervals are plotted in Figure \ref{fig:MoransI}, as well.

\begin{figure}
  \centering
  \includegraphics[width=0.45\textwidth]{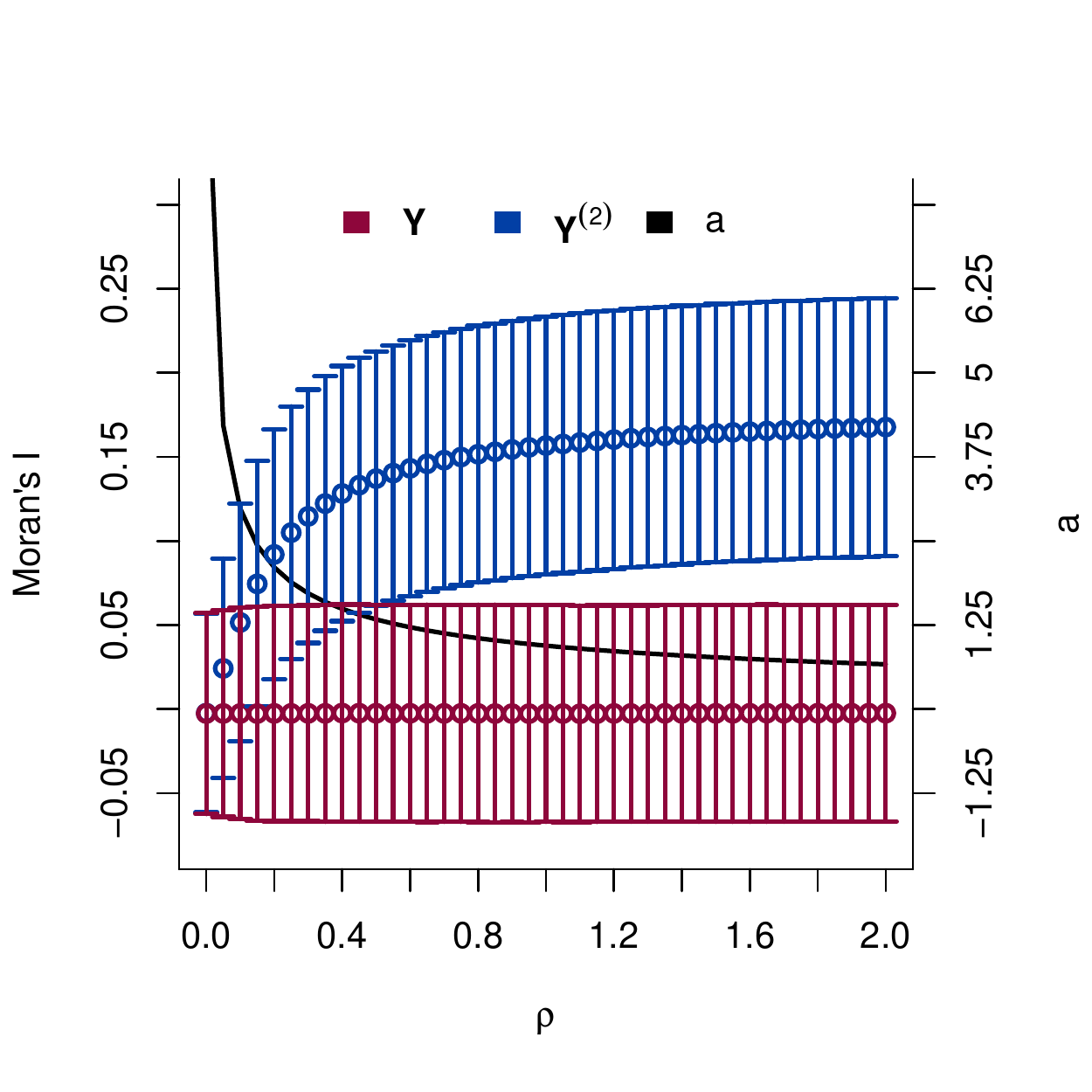}
  \caption{Moran's $I$ of the observations $\xvec{Y}$ and the squared observations $\xvec{Y}^{(2)}$, including the asymptotic 95\% confidence intervals of $I$ for $\rho \in \{0,0.05, \ldots, 2\}$. The resulting bound $a$ is plotted as a bold, black line.}\label{fig:MoransI} 
\end{figure}

\subsection{Exponential Spatial ARCH model}\label{sec:EspARCH}

Next, we consider an exponential spatial ARCH process (E-spARCH). In this setting, we define the natural logarithm of $\xvec{h}_E = (h_{E}(\xvec{s}_1), \ldots, h_{E}(\xvec{s}_n))'$ as
\begin{equation}\label{eq:EspARCHh}
\ln \xvec{h}_E = \alpha\xvec{1} + \rho\xmat{W} g_b(\xvec{\varepsilon}) \, ,
\end{equation}
with a function $g_b: \xset{R}^n \rightarrow \xset{R}^n$.
Like \cite{Nelson91}, we assume that
\begin{equation*}
g_b(\xvec{\varepsilon}) = (\ln |\varepsilon(\xvec{s}_1)|^b, \ldots, \ln |\varepsilon(\xvec{s}_n)|^b)'
\end{equation*}
for positive values of $b$. For this definition, there is a one-to-one relation between $\xvec{Y}$ and $\xvec{\varepsilon}$, as we show in the following theorem.

\begin{theorem}\label{th:EspARCH}
Suppose that $\alpha > 0$, $\rho \geq 0$, and $w_{ij} \geq 0$ for all $i,j = 1, \ldots, n$ and $g_b(\xvec{\varepsilon}) = (\ln |\varepsilon(\xvec{s}_1)|^b, \ldots, \ln |\varepsilon(\xvec{s}_n)|^b)'$. Then there exists one and only one $Y(\xvec{s}_1), \ldots, Y(\xvec{s}_n)$ that corresponds to each $\varepsilon(\xvec{s}_1), \ldots, \varepsilon(\xvec{s}_n)$ for $b > 0$.
\end{theorem}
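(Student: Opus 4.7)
The plan is to exhibit the bijection between $\xvec{\varepsilon}$ and $\xvec{Y}$ explicitly in both directions, exploiting the fact that the E-spARCH specification expresses $\ln \xvec{h}_E$ as an explicit affine function of $\ln|\xvec{\varepsilon}|$ with no feedback from $\xvec{Y}$. First I would handle the direction $\xvec{\varepsilon}\mapsto\xvec{Y}$: for any $\xvec{\varepsilon}$ with nonzero components, equation \eqref{eq:EspARCHh} determines $\ln \xvec{h}_E$, and exponentiation produces a unique, strictly positive vector $\xvec{h}_E$ with $h_E(\xvec{s}_i)=\exp(\alpha+\rho b\sum_j w_{ij}\ln|\varepsilon(\xvec{s}_j)|)>0$. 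The identity $\xvec{Y}=\mathrm{diag}(\xvec{h}_E)^{1/2}\xvec{\varepsilon}$ then yields a unique $\xvec{Y}$. Note that in contrast to the spARCH model of Section~\ref{sec:spARCH}, no positivity constraint on $\xvec{h}_E$ is needed because $\exp(\cdot)$ is automatically positive; this is precisely the point of the exponential parametrisation.

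Next I would invert the map. Taking absolute values and logarithms in $Y(\xvec{s}_i)=h_E(\xvec{s}_i)^{1/2}\varepsilon(\xvec{s}_i)$ and substituting the expression for $\ln h_E(\xvec{s}_i)$ from \eqref{eq:EspARCHh}, I set $u_i=\ln|\varepsilon(\xvec{s}_i)|$ and $v_i=\ln|Y(\xvec{s}_i)|$ to obtain the linear system
\begin{equation*}
\xvec{v} \;=\; \tfrac{\alpha}{2}\xvec{1} \;+\; \left(\xmat{I}+\tfrac{\rho b}{2}\xmat{W}\right)\xvec{u}.
\end{equation*}
Solving for $\xvec{u}$ recovers $|\varepsilon(\xvec{s}_i)|=e^{u_i}$, and the sign of $\varepsilon(\xvec{s}_i)$ is forced to coincide with that of $Y(\xvec{s}_i)$ since $h_E(\xvec{s}_i)^{1/2}>0$. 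Hence each $\xvec{Y}$ determines a unique $\xvec{\varepsilon}$, completing the one-to-one correspondence.

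The main obstacle is to justify invertibility of $\xmat{I}+\tfrac{\rho b}{2}\xmat{W}$. Since $\xmat{W}$ has nonnegative entries and zero diagonal and $\rho,b\geq 0$, singularity happens exactly when $-2/(\rho b)$ lies in the spectrum of $\xmat{W}$. The case $\rho=0$ is trivial. For $\rho>0$, I would appeal to Perron--Frobenius: the eigenvalues of a nonnegative matrix are bounded in modulus by its spectral radius, so it is sufficient that $\tfrac{\rho b}{2}$ is not the reciprocal of a (necessarily negative real) eigenvalue of $\xmat{W}$; this is automatic whenever $\tfrac{\rho b}{2}\,\lVert\xmat{W}\rVert<1$ in any operator norm and in any case falls under the standing regularity assumptions on $\xmat{W}$ that parallel those in Lemma~\ref{th:spARCH}. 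The degenerate event $\varepsilon(\xvec{s}_j)=0$, which would make $\ln|\varepsilon(\xvec{s}_j)|$ undefined, has probability zero under any absolutely continuous error law and can be excluded from the sample space without loss of generality.
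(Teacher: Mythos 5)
Your overall route is essentially the paper's: the substitution $\ln|\xvec{Y}| = \tfrac{1}{2}\ln \xvec{h}_E + \ln|\xvec{\varepsilon}|$ produces exactly the matrix $\xmat{I}+\tfrac{\rho b}{2}\xmat{W}$ on which the paper's proof also rests (there in the form $\xmat{S}=(\xmat{I}+\tfrac{1}{2}\rho b\xmat{W})^{-1}$, solved for $\ln\xvec{h}_E$ rather than for $\ln|\xvec{\varepsilon}|$). Your forward direction $\xvec{\varepsilon}\mapsto\xvec{Y}$ is clean and complete, and it is worth noting that it alone already proves the literal statement of the theorem, since $\xvec{h}_E$ is an explicit function of $\xvec{\varepsilon}$ only and is automatically positive; no matrix inversion is needed for that direction.

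The weak point is the invertibility of $\xmat{I}+\tfrac{\rho b}{2}\xmat{W}$ in your inverse direction. You correctly reduce singularity to $-2/(\rho b)$ lying in the spectrum of $\xmat{W}$, but you then wave it away as ``automatic'' under a norm condition or ``standing regularity assumptions that parallel Lemma~\ref{th:spARCH}''. No such assumptions appear in the theorem's hypotheses ($\alpha>0$, $\rho\geq 0$, $w_{ij}\geq 0$, $b>0$), and Lemma~\ref{th:spARCH} concerns a different matrix, so this step is not established from what is assumed. The failure is not vacuous: for two mutually adjacent locations with row-standardized weights, $\xmat{W}$ has eigenvalue $-1$, so with $b=2$ (the package default) and $\rho=1$ the matrix $\xmat{I}+\xmat{W}$ is singular. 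If you want the full one-to-one claim, you need an explicit condition such as $\tfrac{\rho b}{2}$ times the spectral radius of $\xmat{W}$ being strictly less than one (your operator-norm condition is a sufficient version of this), or you should confine the statement to the forward direction. In fairness, the paper's own proof is no stronger at this very spot --- it asserts that a nonnegative $\xmat{W}$ with zero diagonal is positive definite and invokes a determinant inequality, neither of which holds in general --- so your account is the more honest about where the real condition lies, but as written the inverse-direction step is still a gap.
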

At location $\xvec{s}_i$, the value of $h_{E}(\xvec{s}_i)$ is then given by
\begin{equation*}
\ln h_{E}(\xvec{s}_i) = \alpha + \sum_{v=1}^{n} \rho b w_{iv} \ln |\varepsilon(\xvec{s}_v)| \; \text{for} \; i = 1, \ldots, n\, .
\end{equation*}
For this definition of $g_b$, one could rewrite $\ln \xvec{h}$ as
\begin{equation}\label{eq:h_EspARCH}
\ln \xvec{h}_E = \xmat{S} \left(\alpha\xvec{1} + \rho b \xmat{W} \ln|\xvec{Y}|\right)
\end{equation}
with
\begin{equation*}
\xmat{S} = (s_{ij})_{i,j = 1, \ldots, n} = \left(\xmat{I} + \frac{1}{2} \rho b \xmat{W}\right)^{-1} \, .
\end{equation*}
In contrast to the spARCH process described in Section \ref{sec:spARCH}, Corollary \ref{cor:EspARCH} shows that the entries of $\xvec{h}_E$ are positive for all $\rho \geq 0$ and $\alpha > 0$. Hence, the process is well-defined and there are no further restrictions needed, as in the case for the spARCH model.

\begin{corollary}\label{cor:EspARCH}
Assume that the assumptions of Theorem \ref{th:EspARCH} are fulfilled, then $\xvec{h}_{E}(\xvec{s}_i) \geq 0$ for all $i = 1, \ldots, n$.
\end{corollary}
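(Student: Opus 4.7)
The plan is essentially to exploit that $\mathbf{h}_E$ is parameterised through its logarithm in \eqref{eq:EspARCHh}, so non-negativity is almost tautological once we confirm that the logarithm is a well-defined (extended) real number. Concretely, for each location I would invert the defining relation pointwise to write
\[
h_E(\mathbf{s}_i) \;=\; \exp\!\bigl(\ln h_E(\mathbf{s}_i)\bigr)
\;=\; \exp\!\left(\alpha + \sum_{v=1}^{n} \rho\, b\, w_{iv} \ln |\varepsilon(\mathbf{s}_v)| \right),
\]
which is the componentwise form of \eqref{eq:EspARCHh} already recorded just before \eqref{eq:h_EspARCH}.

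Under the hypotheses of Theorem \ref{th:EspARCH}, the scalars $\alpha$, $\rho$, $b$ and $w_{iv}$ are all finite and non-negative, and Theorem \ref{th:EspARCH} gives a one-to-one correspondence between $\boldsymbol{\varepsilon}$ and $\mathbf{Y}$ for $b>0$, so the right-hand side of the above display is a well-defined element of $[0,\infty)$ for every realisation of $\boldsymbol{\varepsilon}$ with $\varepsilon(\mathbf{s}_v)\neq 0$ for all $v$. Since $\exp(\cdot)$ maps $\mathbb{R}$ into $(0,\infty)$ and the extended value $-\infty$ into $0$, we obtain $h_E(\mathbf{s}_i)\geq 0$, which is the claim.

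The only place a subtlety could arise is the degenerate event $\varepsilon(\mathbf{s}_v)=0$, where $\ln|\varepsilon(\mathbf{s}_v)|=-\infty$; but since the coefficients $\rho b w_{iv}\geq 0$ are non-negative and $\alpha>0$ is finite, the resulting sum lies in $[-\infty,\infty)$ and its exponential still lies in $[0,\infty)$. I therefore expect no real obstacle: the corollary is immediate from the exponential link function, in sharp contrast to Lemma \ref{th:spARCH}, where non-negativity of $\mathbf{h}_O$ required the additional structural condition on $(\mathbf{I}-\mathbf{A}^2)^{-1}$.
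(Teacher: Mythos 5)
Your proof is correct, and it takes a more direct route than the paper's. You stay entirely on the $\boldsymbol{\varepsilon}$-side: since $\mathbf{h}_E$ is defined only through its logarithm in \eqref{eq:EspARCHh}, each component is $h_E(\boldsymbol{s}_i)=\exp\!\bigl(\alpha+\sum_{v=1}^{n}\rho b\, w_{iv}\ln|\varepsilon(\boldsymbol{s}_v)|\bigr)$, and the exponential map (extended so that $-\infty\mapsto 0$, which disposes of the degenerate event $\varepsilon(\boldsymbol{s}_v)=0$) takes values in $[0,\infty)$, so non-negativity is immediate without any appeal to the signs of matrix entries beyond their finiteness. The paper instead argues through the $\mathbf{Y}$-representation \eqref{eq:h_EspARCH}, $\ln\mathbf{h}_E=\mathbf{S}\bigl(\alpha\mathbf{1}+\rho b\,\mathbf{W}\ln|\mathbf{Y}|\bigr)$ with $\mathbf{S}=\bigl(\mathbf{I}+\tfrac12\rho b\,\mathbf{W}\bigr)^{-1}$, claiming $\mathbf{S}$ is entrywise non-negative and concluding that $\ln\mathbf{h}_E$ is positive for $\alpha>0$; that line aims at a stronger conclusion (a sign for $\ln h_E$ itself, i.e.\ effectively $h_E\geq 1$) but rests on steps that need extra justification, since for a non-negative $\mathbf{W}$ the Neumann expansion of $\bigl(\mathbf{I}+\tfrac12\rho b\,\mathbf{W}\bigr)^{-1}$ alternates in sign, so entrywise non-negativity of $\mathbf{S}$ is not automatic, and $\ln|\mathbf{Y}|$ may well be negative. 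Your exponential-link argument delivers exactly what the corollary asserts with no such structural hypotheses, and it makes transparent the contrast you note with Lemma \ref{th:spARCH}, where non-negativity of $\mathbf{h}_O$ genuinely does require the extra condition on $(\mathbf{I}-\mathbf{A}^2)^{-1}$.
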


For all proofs, we refer to the Appendix.

\subsection{Complex Spatial ARCH model}\label{sec:complexspARCH}

Now, we propose a complex-valued spARCH process. In order to obtain a solution of $\mathrm{diag}(\xvec{h})^{1/2}$ in the $n$-dimensional space of real numbers for the model defined in \eqref{eq:h_spARCH}, all elements of the matrix $(\xmat{I} - \xmat{A}^2)^{-1}$ must be nonnegative (see \citealt{Otto16_arxiv}). For the complex spARCH process, we relax the assumption that there should be a solution to $\mathrm{diag}(\xvec{h})^{1/2}$ in the real numbers and also consider  complex solutions.
Thus, the definition of $\xvec{h}$ coincides with $\xvec{h}_O$ of the original model, i.e.,
\begin{equation}\label{eq:complexspARCH}
h_{C}(\xvec{s}_i) = \alpha + \sum_{v=1}^{n} \rho w_{iv} Y(\xvec{s}_v)^2 \, .
\end{equation}

\subsection{Spatiotemporal ARCH model}\label{sec:spatiotemporal}

Finally, we show that spatiotemporal processes are covered directly by these approaches. For spatiotemporal data, the vector $\xvec{s}$ simply includes both the spatial location $\xvec{s}_s$ and the point in time $t$, i.e.,  $\xvec{s} = (\xvec{s}_s, t)'$. In addition, it is important to assume that future observations do not influence past observations, i.e., the weights $w_{ij}$ must be zero if $t_j \geq t_i$. However, the dimension of the weighting matrix $\xmat{W}$ might become very large for this representation. More precisely, the matrix has dimension $NT \times NT$, where $N$ is the total number of spatial locations and $T$ stands for the total number of time points. From a computational perspective, this is not necessarily a drawback since $\xmat{W}$ is usually sparse and could also have a block diagonal structure. Moreover, it is often reasonable to assume that $h(\xvec{s}_i)$ is only influenced by the neighbors of $\xvec{s}_{s, i}$ at the same point of time and by past observations at the same location. Then the weighting matrix would have the following structure
\begin{equation*}
\xmat{W} = \left(\begin{array}{cccc}
\xmat{W}_1    & \xmat{0}      & \cdots & \xmat{0} \\
\xmat{I}      & \xmat{W}_2    & \cdots & \xmat{0} \\
\vdots        & \vdots        & \ddots & \vdots        \\
\xmat{I}      & \xmat{I}      & \cdots & \xmat{W}_T    \\
\end{array}\right) \, .
\end{equation*}
Indeed, it is plausible to weight the spatial and temporal lags differently by replacing $\rho \xmat{W}$ by a sum
\begin{equation*}
\rho \left(\begin{array}{cccc}
\xmat{W}_1    & \xmat{0}      & \cdots & \xmat{0} \\
\xmat{0}      & \xmat{W}_2    & \cdots & \xmat{0} \\
\vdots        & \vdots        & \ddots & \vdots        \\
\xmat{0}      & \xmat{0}      & \cdots & \xmat{W}_T    \\
\end{array}\right)
+
\phi_1 \left(\begin{array}{cccc}
\xmat{0}      & \xmat{0}      & \cdots & \xmat{0} \\
\xmat{I}      & \xmat{0}      & \cdots & \xmat{0} \\
\vdots        & \vdots        & \ddots & \vdots        \\
\xmat{0}      & \xmat{0}      & \cdots & \xmat{0}      \\
\end{array}\right)
+ \ldots
\,
\end{equation*}
with positive weights $\phi_k$ for all temporal lags $1 \leq k \leq p$.

\begin{table}
\begin{center}
\begin{tabular}{
p{0.12\textwidth}>{\raggedright\arraybackslash}
p{0.4\textwidth}>{\raggedright\arraybackslash}
p{0.45\textwidth}}
\hline
Process type & Definition of $\xvec{h}$ & Comments \\ 
\hline
spARCH               & $\xvec{h}_O = \alpha\xvec{1} + \rho\xmat{W} \left(\xmat{I} - \xmat{A} \right)^{-1} (\alpha \xvec{\varepsilon}^{(2)})$ & $\xvec{\varepsilon}$ is simulated from multivariate normal distribution (MN) truncated on the interval $\left[-1/\sqrt[4]{||\rho^2\xmat{W}^2||_1}, 1/\sqrt[4]{||\rho^2\xmat{W}^2||_1}\right]$ \\  
spARCH (oriented)    & $\xvec{h}_O = \alpha\xvec{1} + \rho\xmat{W} \left(\xmat{I} - \xmat{A} \right)^{-1} (\alpha \xvec{\varepsilon}^{(2)})$ & $\xvec{\varepsilon} \sim \text{MN}(\xvec{0}, \xmat{I})$, $\xmat{W}$ must be a strictly triangular weighting matrix                                                                \\  
spatial E-ARCH       & $\ln \xvec{h}_E = \xmat{S} \left(\alpha\xvec{1} + \rho b \xmat{W} \ln|\xvec{Y}|\right)$ &  $\xvec{\varepsilon} \sim \text{MN}(\xvec{0}, \xmat{I})$, but moments of $\xvec{Y}$ differ from the moments of classical spARCH process (cf. \citealt{Otto16_arxiv})                                                           \\  
spARCH (complex)   & $\xvec{h}_C = \alpha\xvec{1} + \rho\xmat{W} \left(\xmat{I} - \xmat{A} \right)^{-1}(\alpha \xvec{\varepsilon}^{(2)})$ & $\xvec{\varepsilon} \sim \text{MN}(\xvec{0}, \xmat{I})$, but complex-valued $\xvec{Y}$    \\  
\hline
\end{tabular}
\end{center}
\caption{Overview of all types of spARCH models implemented in the \pkg{spGARCH} package.} \label{table:ARCHmodels}
\end{table}

\subsection{Spatial ARCH Disturbances}\label{sec:SARspARCH}

Since all conditional and unconditional odd moments of spatial ARCH processes are equal to zero, these ARCH-type models can easily be added to any kind of (spatial) regression model without influencing the mean equation as well as the spatial dependence in the first conditional and unconditional moments. This makes the spatial ARCH models flexible tools for dealing with conditional spatial heteroscedasticity in the residuals of spatial models. For instance, one can consider spatial autoregressive models for $\xvec{Y}$, i.e.,
\begin{eqnarray}
\xvec{Y} = \lambda \xmat{B} \xvec{Y} +  \xmat{X} \xvec{\beta} + \xvec{u} \label{eq:SARspARCH1}
\end{eqnarray}
with $\xvec{u}$ following either a spatial ARCH model with the original definition $\xvec{h}_O$ or the exponential model with $\xvec{h}_E$. Thus,
\begin{eqnarray}
\xvec{u} = \mathrm{diag}(\xvec{h})^{1/2}\xvec{\varepsilon} \, . \label{eq:SARspARCH2}
\end{eqnarray}
Further, we call this model the SARspARCH model. For $\lambda = 0$, the model collapses to a simple linear regression model; if, additionally, $\xvec{\beta} = \xvec{0}$, the model coincides with the previously discussed ARCH models. Thus, these coefficients can be used for testing against nested models.

In contrast to other models for heteroscedastic errors, such as the SAR\-AR or SAR\-MA models, which assume spatial autoregressive or spatial moving average error terms (cf. \citealt{Kelejian10,Fingleton08b,Haining78}), the SARspARCH model does not affect the spatial autocorrelation of the process, just the spatial heteroscedasticity, because all conditional and unconditional odd moments are equal to zero. Thus, $\lambda \xmat{B}$ can be interpreted directly as the spatial dependence of the process, while $\rho \xmat{W}$ describes the spatial dependence in the second conditional moments. Moreover, these two parts can be interpreted separately, as we will demonstrate in the last section via an empirical example.

\section{Parameter Estimation}\label{sec:estimation}

The parameters of a spatial ARCH process can be estimated by the maximum-likelihood approach. To obtain the joint density for $\xvec{Y} = k(\xvec{\varepsilon})$, the Jacobian matrix of $k^{-1}$ at the observed values $\xvec{y}$ must be computed (e.g., \citealt{Bickel15}). If $f_{\xvec{\varepsilon}}$ is the distribution of the error process, then the joint density $f_{\xvec{Y}}$ of $\xvec{Y}$ is given by
\begin{eqnarray}
f_{\xvec{Y}}(\xvec{y}) & =  & f_{(Y(\xvec{s}_1), \ldots, Y(\xvec{s}_n))}(y_1, \ldots, y_n) \nonumber \\
& = & f_{\xvec{\varepsilon}}\left(\frac{y_1}{\sqrt{h}_1}, \ldots, \frac{y_n}{\sqrt{h}_n}\right) | \det\left( \left( \frac{\partial y_j/\sqrt{h_j}}{\partial y_i} \right)_{i,j=1,\ldots,n}\right) | \, . \label{eq:transformation}
\end{eqnarray}
If the residuals are additionally independent and identically distributed, the parameter estimates can be obtained from the maximization of the log-likelihood as follows
\begin{eqnarray*}
(\hat{\alpha}, \hat{\rho}) = \underset{\alpha > 0, \rho \geq 0}{\arg\max} \; \ln | \det\left( \left( \frac{\partial y_j/\sqrt{h_j}}{\partial y_i} \right)_{i,j=1,\ldots,n}\right) | +  \sum_{i = 1}^{n} \ln f_{\varepsilon}(y_i) \, .
\end{eqnarray*}

The Jacobian matrix, of course, depends on the definition of $\xvec{h}$. For the spARCH process, this Jacobian matrix can be specified as
\begin{equation*}
\frac{\partial y_j/\sqrt{h_j}}{\partial y_i} = \left\{
\begin{array}{ccc}
1\, /\, \sqrt{h}_j & \mbox{for} & i=j \\
- \frac{y_i y_j}{h_j^{3/2}} \rho w_{ji} & \mbox{for} & i \neq j
\end{array} \right. \, .
\end{equation*}
In contrast, the Jacobian matrix for the E-spARCH process is slightly different, namely
\begin{equation*}
\frac{\partial y_j/\sqrt{h_j}}{\partial y_i} = \left\{
\begin{array}{ccc}
1\, /\, \sqrt{h_j} & \mbox{for} & i = j \\
- \frac{b y_j}{2 y_i h_j^{3/2}} \rho s_{ji} w_{ji} & \mbox{for} & i \neq j
\end{array} \right. \,
\end{equation*}
with
\begin{equation*}
h_j = \mathrm{exp}\left(\sum_{v = 1}^{n} s_{jv}\left(\alpha +\rho w_{jv} \ln |y_v| \right) \right) \, .
\end{equation*}

From a computational perspective, the computation of the log determinant of this matrix is feasible for large data sets. To be precise, the log-determinant is equal to
\begin{equation*}
\ln | \det\left( \text{diag}\left(\frac{h_1}{y_1^2},\ldots, \frac{h_n}{y_n^2}\right) - \rho \xmat{W}^\prime \right) | + \sum_{i=1}^n \ln \frac{y_i^2}{h_i^{3/2}}
\end{equation*}
for the spARCH process. Similarly, it is given by
\begin{equation*}
\ln | \det\left( \mathrm{diag}\left(\frac{2 h_1}{b},\ldots, \frac{2 h_n}{b}\right) - \rho \xmat{S}^\prime \circ \xmat{W}^\prime \right) | +  \sum_{i=1}^n \ln \frac{b}{2 h_i^{3/2}} \, .
\end{equation*}
for the E-spARCH process, where $\circ$ stands for the Hadamard product.

In the \pkg{spGARCH} package, we implemented the iterative maximization algorithm with inequality constraints proposed by \cite{Ye88}, which is implemented in the \proglang{R}-package \pkg{Rsolnp} (see \citealt{RSolnp}). It is important to note that the log determinant of the Jacobian also depends on the parameters in such a way that it needs to be computed in each iteration (see, also, Theorem 13.7.3 of \cite{Harville97} for the computation of a determinant for the sum of a diagonal matrix and an arbitrary matrix), but $\xmat{W}$, and therefore $\xmat{S} \circ \xmat{W}$, are usually sparse. Thus, the required time for the estimation of the parameters depends mainly on the dimension and sparsity of $\xmat{W}$.

\section[Overview of the R-Package spGARCH]{Overview of the \proglang{R}-Package \pkg{spGARCH}}\label{sec:spGARCH_package}

The \proglang{R}-package \pkg{spGARCH} provides several basic functions for the analysis of spatial data showing spatial conditional heteroscedasticity. In particular, the process can be simulated for arbitrarily chosen weighting matrices according to the definitions in Section \ref{sec:models}. Moreover, we implement a function for the computation of the maximum-likelihood estimators. To generate a user-friendly output, the object generated by the estimation function can easily be summarized by the generic \fct{summary} function. We also provide all common generic methods, such as \fct{plot}, \fct{print}, \fct{logLik}, and so forth. To maximize the computational efficiency, the actual version of the package contains compiled \proglang{C++} code (using the packages \pkg{Rcpp} and \pkg{RcppEigen}, cf. \citealt{Rcpp,RcppEigen}). A brief overview of the package and its main functions is given in Table \ref{table:functions}. Further, we focus on the two main aspects of the package, i.e., the simulation (described in detail in Section \ref{sec:comp_simulation}) and estimation (Section \ref{sec:comp_estimation}) aspects of the spARCH, E-spARCH, and SARspARCH processes.

\begin{table}
\begin{center}
\begin{tabular}{
p{0.3\textwidth}
p{0.65\textwidth}}
\hline
Function & Description \\
\hline
\emph{Main functions} & \\
$\quad$ \fct{sim.spARCH}     & Simulation of spARCH and E-spARCH processes\\
$\quad$ \fct{qml.spARCH}     & Quasi-maximum-likelihood estimation for spARCH models \\
$\quad$ \fct{qml.SARspARCH}  & Quasi-maximum-likelihood estimation for SAR models with spARCH residuals \\
\emph{Generic methods} & \\
$\quad$ \fct{summary}        & Summary of an object of \class{spARCH} class generated by \fct{qml.spARCH} or \fct{qml.SARspARCH} \\
$\quad$ \fct{print}          & Printing method for \class{spARCH} class or \code{summary.spARCH} class \\
$\quad$ \fct{fitted}         & Extracts the fitted values of an object of  \class{spARCH} class \\
$\quad$ \fct{residuals}      & Extracts the residuals of an object of  \class{spARCH} class \\
$\quad$ \fct{logLik}         & Extracts the log-likelihood of an object of  \class{spARCH} class \\
$\quad$ \fct{extractAIC}     & Extracts the AIC of an object of  \class{spARCH} class \\
$\quad$ \fct{plot}           & Provides several descriptive plots of the residuals of an object of  \class{spARCH} class \\
\hline
\end{tabular}
\end{center}
\caption{Summary of the main functions of the \pkg{spGARCH} package.} \label{table:functions}
\end{table}

\subsection{Simulation of ARCH-type stochastic processes}\label{sec:comp_simulation}

The simulations of all spatial ARCH-type models are implemented in one function, namely, the \fct{sim.spARCH} function. The different definitions of the model are specified via the argument \code{type}. The use of \fct{sim.spARCH} is very similar to how a basic random number generator is used, meaning that the first argument \code{n} is the number of generated values and all further arguments specify the parameters of the spARCH process. For instance, one might simulate an oriented spARCH process (meaning $\xmat{W}$ is triangular) on a $d \times d$ spatial lattice with $\rho = 0.7$ and $\alpha = 1$ using the following lines.
\begin{lstlisting}
R> require("spdep")
R> rho              <- 0.7
R> alpha            <- 1
R> d                <- 50
R> n                <- d^2
R> nblist           <- cell2nb(d, d, type = "queen")
R> W                <- nb2mat(nblist)
R> W[upper.tri(W)]  <- 0
R> Y                <- sim.spARCH(n = n, rho = rho, alpha = alpha, W = W, type = "gaussian", control = list(seed = 5515))
\end{lstlisting}
To build the spatial weighting matrix, we used \fct{cell2nb} from the \pkg{spdep} package, returning an \code{nb} object of a $d \times d$ lattice (see \citealt{Cressie93,spdep}). Further, we converted the \code{nb} object into a contiguity matrix, as \fct{sim.spARCH} requires either a matrix (class \code{matrix}) or a sparse matrix (class \code{dgCMatrix}) as an argument. Usually, spatial weighting matrices are sparse by construction. Thus, $\xmat{W}$ is always converted internally to a \code{dgCMatrix} matrix or rather to a \code{SparseMatrix} object defined in the \pkg{eigen} library in \proglang{C++}.
Via the \code{control} parameter, a random seed might be passed to the simulation function. If not, a random seed is assigned randomly from a uniform distribution and printed in console in order that one might reproduce the result even without having a random seed specified in advance. We prefer to print a single number in the console rather than returning to the random number generator (RNG) state as an attribute of the returned vector. Thus, a random seed might either be passed as an optional argument to \fct{sim.spARCH} or set before calling \fct{sim.spARCH} by \fct{set.seed}.

There are several types of spatial ARCH processes which can be simulated by \fct{sim.spARCH}. They are all specified by the argument \code{type}. If
\begin{itemize}
\item \code{type = "gaussian"}, then the original spARCH process according to the definition in \cite{Otto16_arxiv} is simulated.
\begin{itemize}
\item If there exists a permutation such that $\xmat{W}$ is a strictly triangular matrix, then the function simulates automatically an oriented spARCH process with independent and identically gaussian distributed errors.
\item If there is no such permutation, then the errors are simulated from a truncated normal distribution with $a = 1 / \sqrt[4]{\rho^2 ||\xmat{W}||_1}$.
\end{itemize}
\item \code{type = "exp"}, an E-spARCH process is simulated with an user-specified value of $b$ (default 2) and standard normal random errors.
\item \code{type = "complex"}, complex solutions of $\mathrm{diag}(\xvec{h})^{1/2}$ are considered in order to simulate the spARCH process.
\end{itemize}

Figure \ref{fig:simulations} illustrates the behavior of different types of spatial ARCH processes. All of them are simulated with the same parameters and random seeds in such a manner that the vector $\xvec{\varepsilon}$ is identical for all types of processes, except for the spARCH process with the truncated normal errors. In the first row, the spatial weighting is achieved via a strictly triangular Queen's contiguity matrix, which means that the spatial dependence has its origin in the upper left corner. To the contrary, $\xmat{W}$ presents a classical Queen's contiguity matrix in the second row. We additionally plot a spatial white noise process for comparison, as we used a rather unconventional two-color scheme. Using this kind of color scheme, one might distinguish between positive and negative observations, such that it is easier to see the spatial volatility clusters. Areas of smaller volatility are characterized by rather evenly gray pixels, whereas clusters of high volatility have rather intense colors. Moreover, the colors fluctuate irregularly between blue and red.

\begin{figure}
  \begin{center}
  \includegraphics[width=0.32\textwidth]{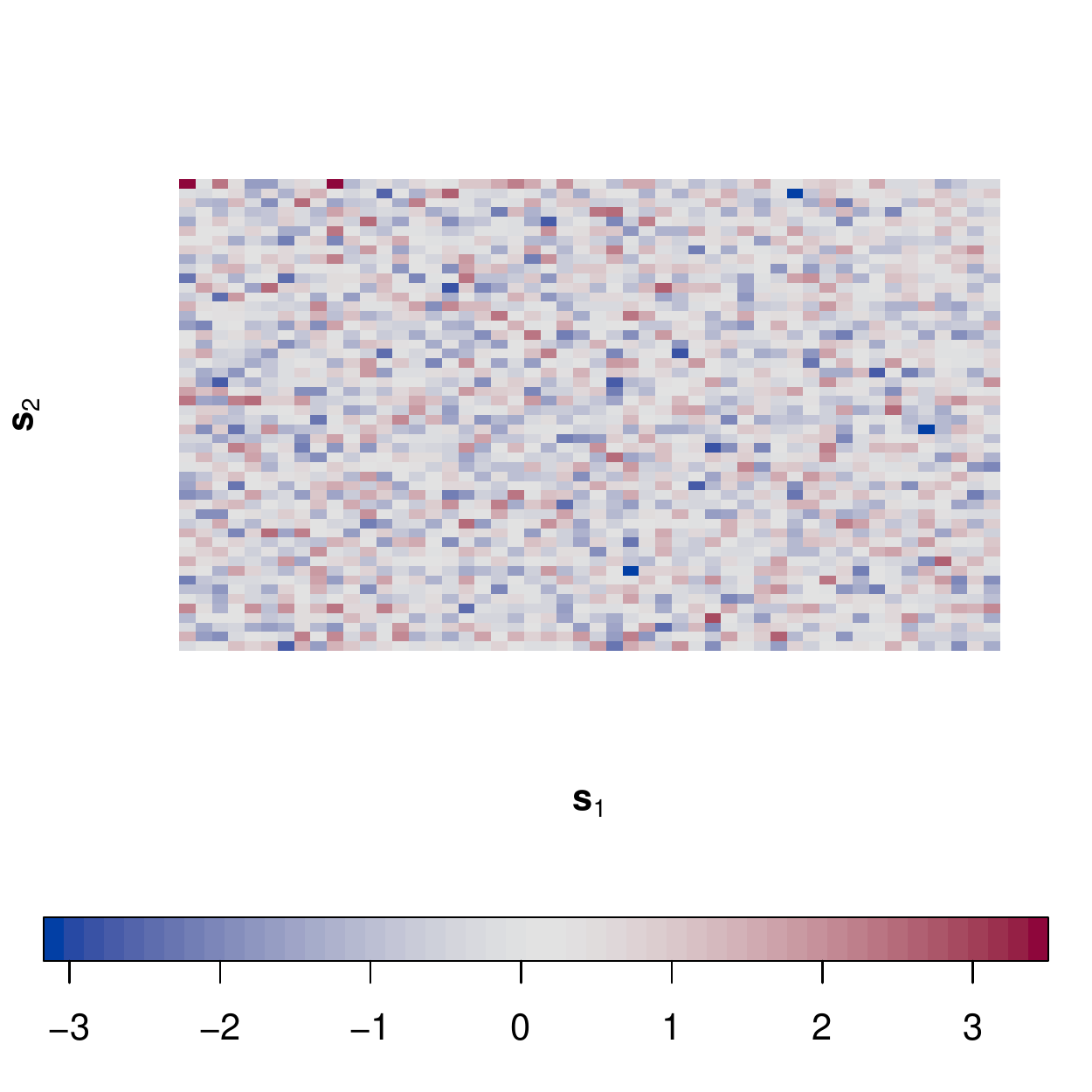}
  \includegraphics[width=0.32\textwidth]{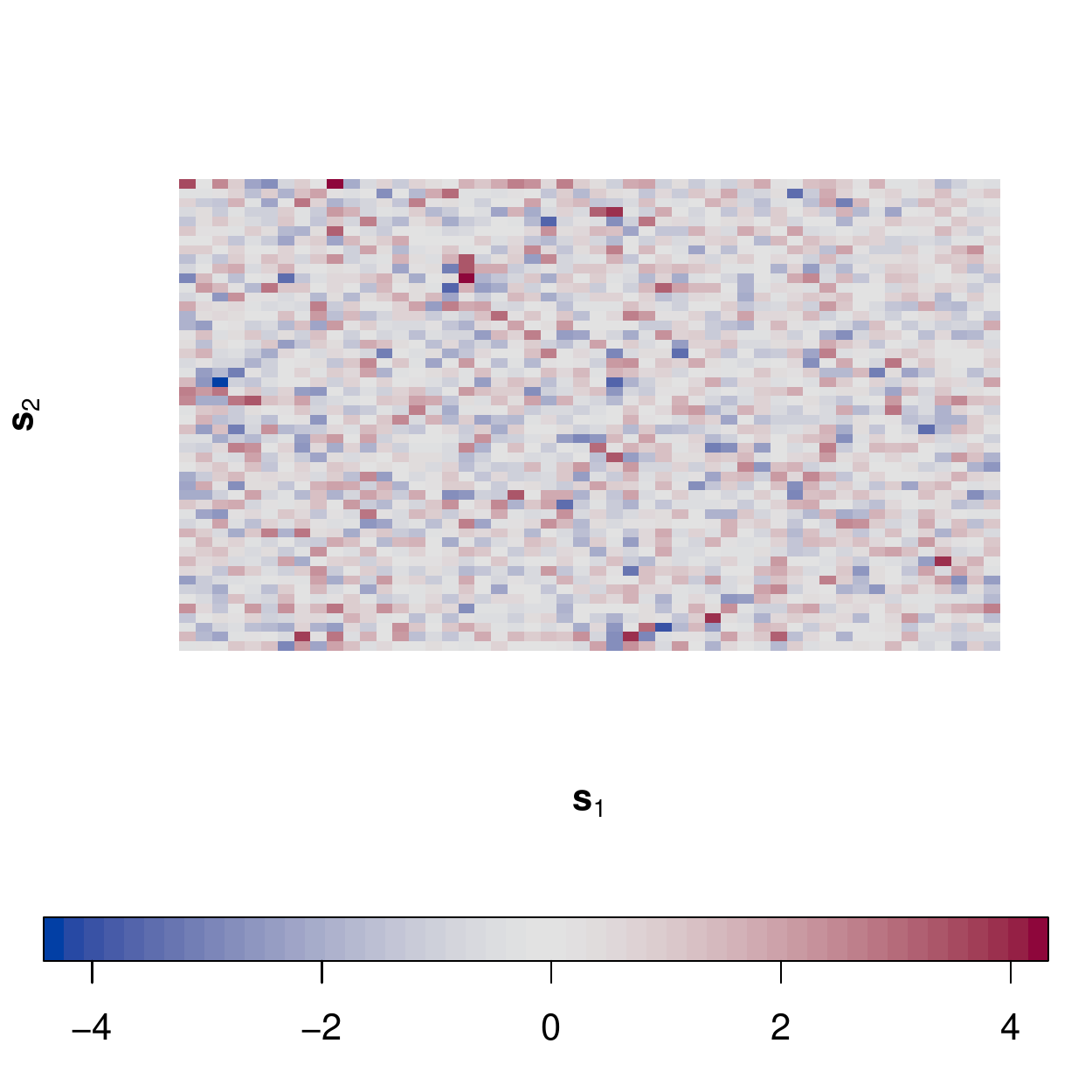}
  \includegraphics[width=0.32\textwidth]{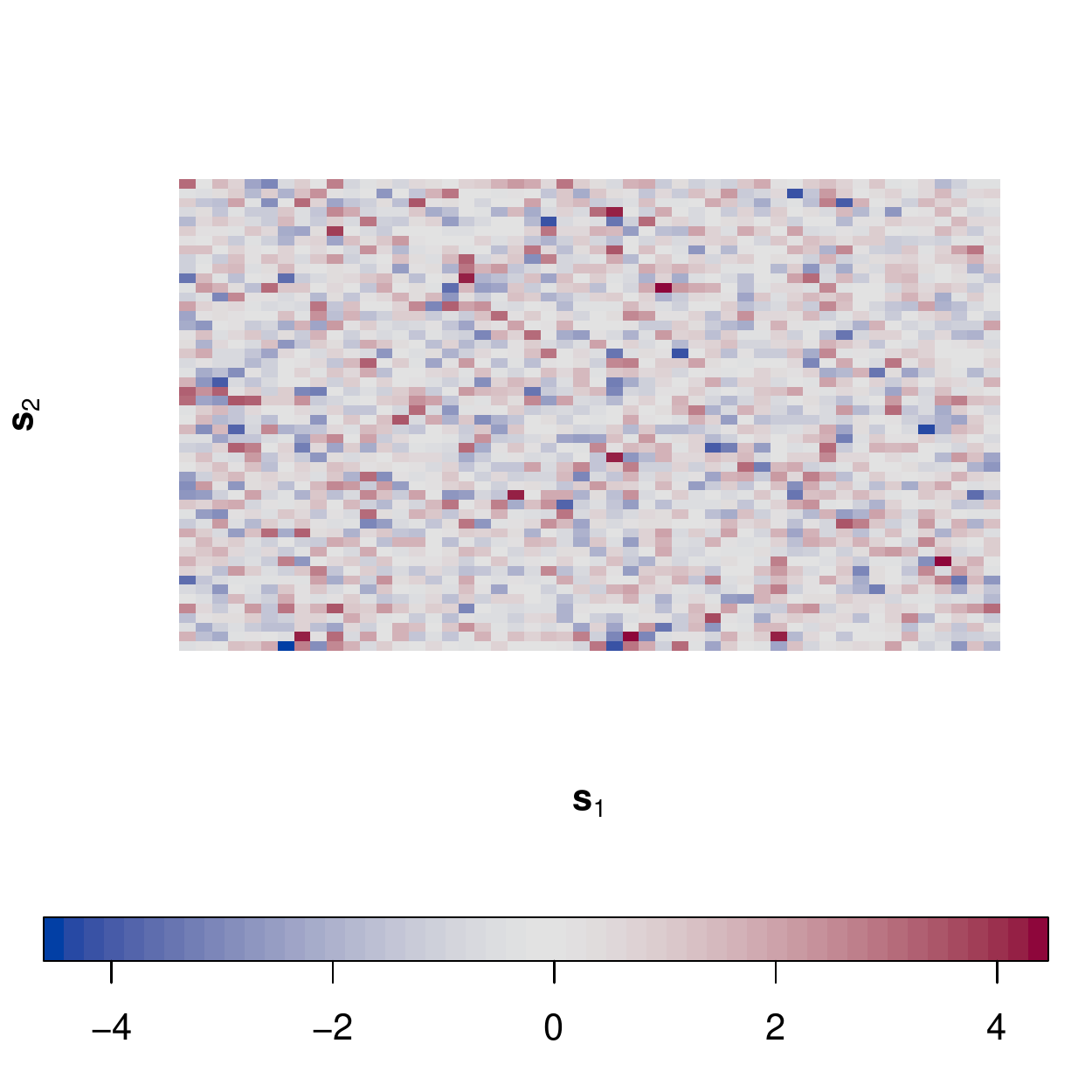}\\
  \includegraphics[width=0.32\textwidth]{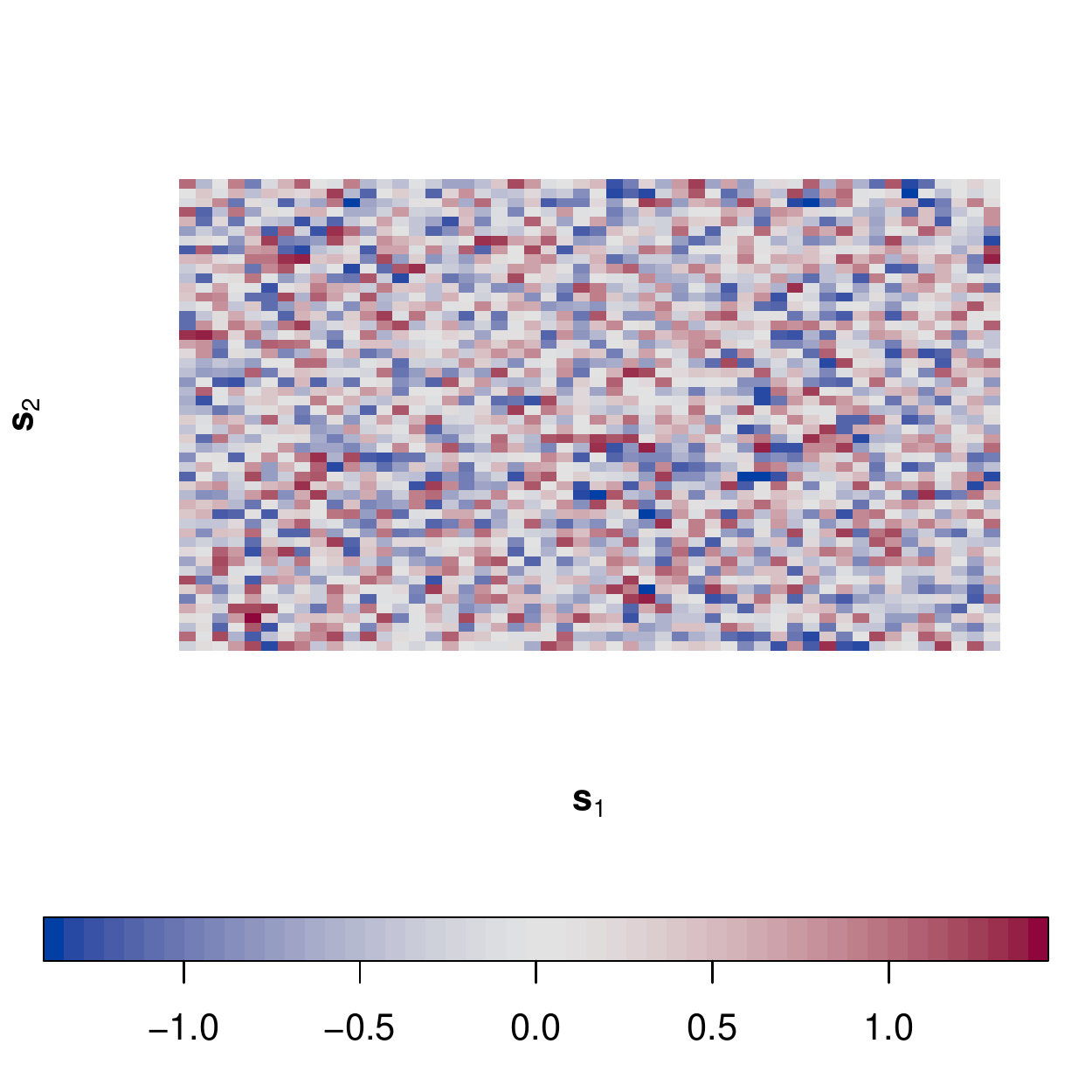}
  \includegraphics[width=0.32\textwidth]{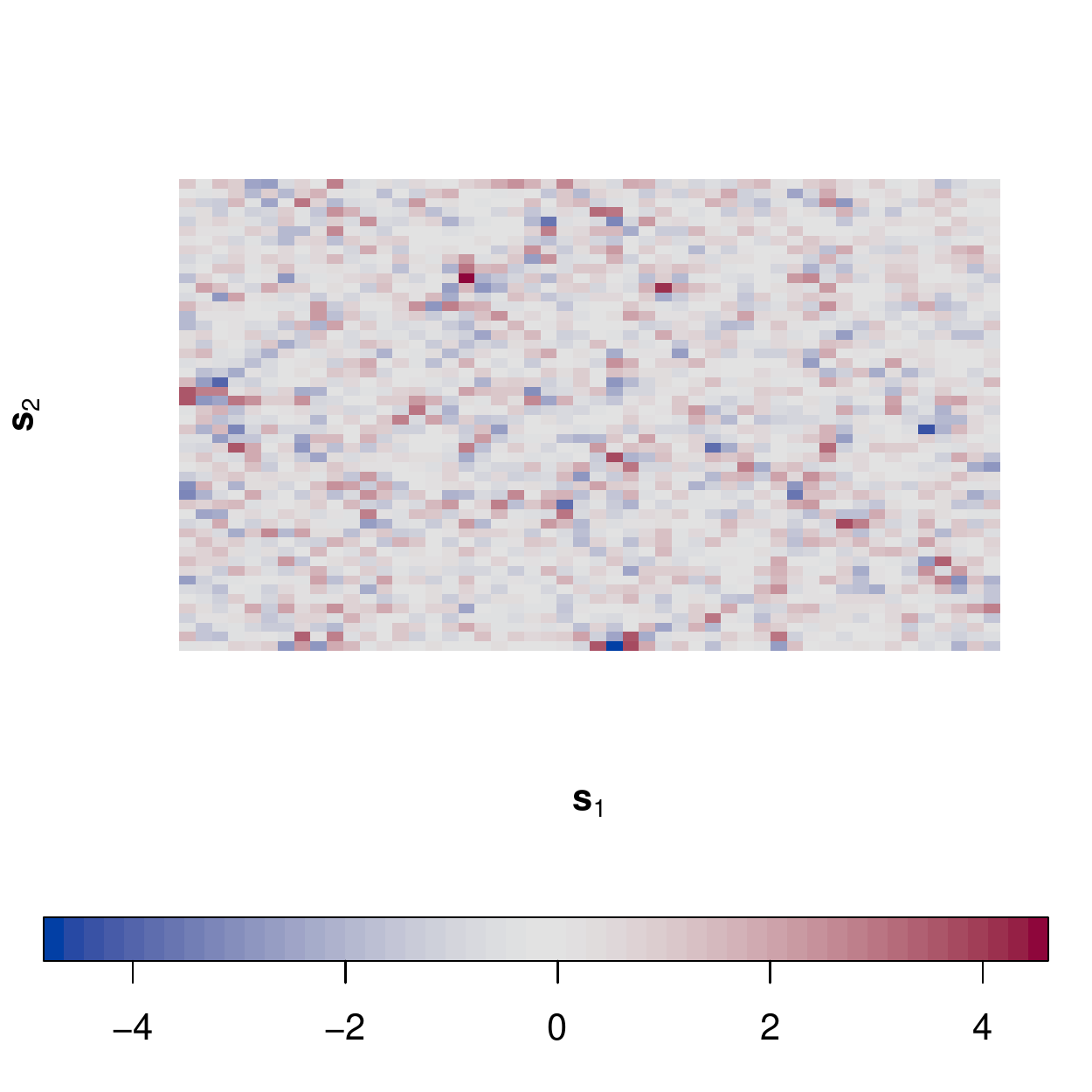}
  \includegraphics[width=0.32\textwidth]{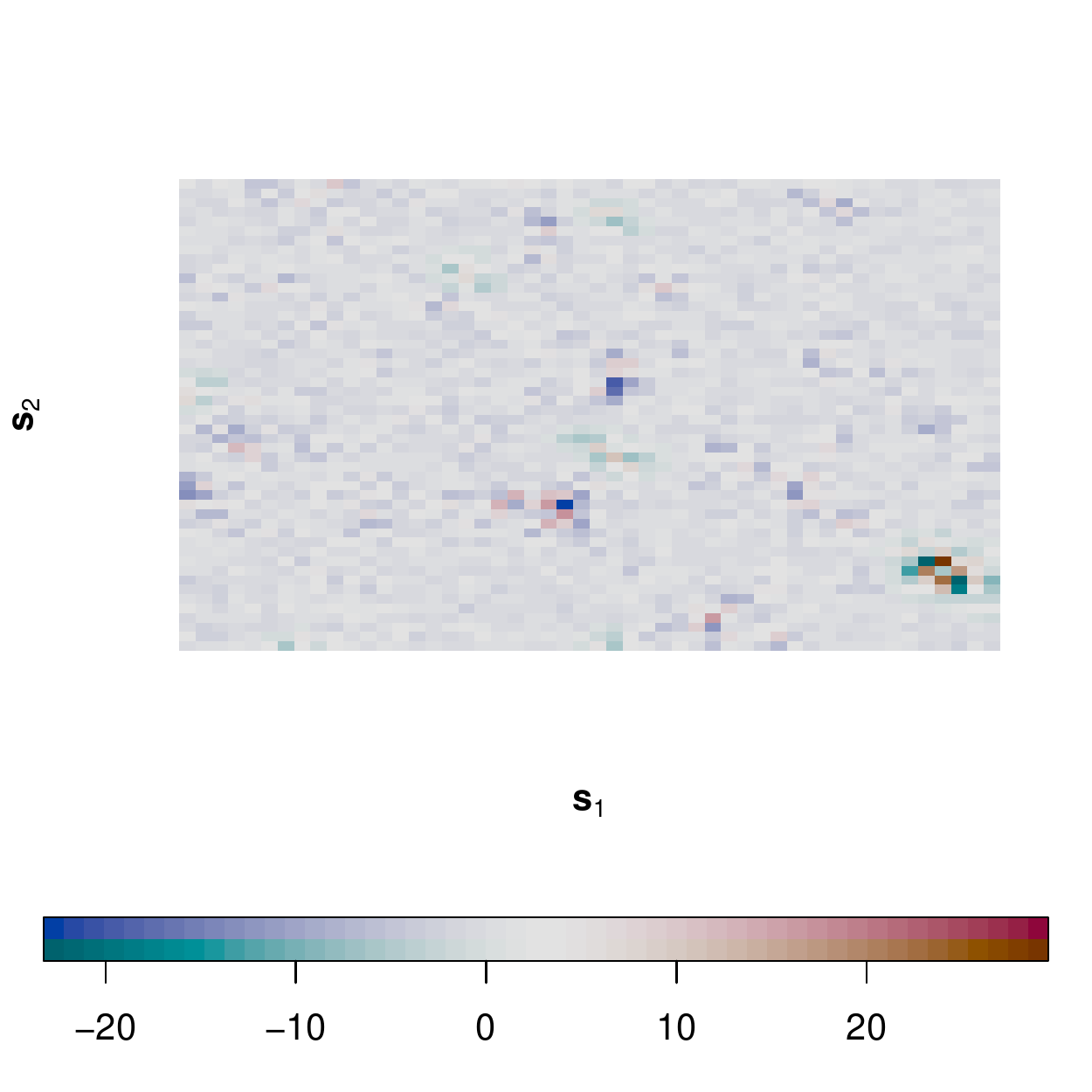}
  \end{center}
  \begin{footnotesize}
  Above left: spatial white noise for comparison; center: oriented spARCH (\code{type = ``gaussian''}); right: spatial E-ARCH (\code{type = ``exp''}).\\[.1cm]
   Below left: spARCH with truncated normal errors (\code{type = ``gaussian''}); center: spatial E-ARCH (\code{type = ``exp''}), right: complex spARCH (\code{type = ``complex''}).
  \end{footnotesize}
  \caption{Simulations on a two-dimensional lattice for triangular matrices (above) and non-triangular matrices (below). For all simulations, we set $\rho = 0.7$ and $\alpha = 1$, and $\xmat{W}$ is chosen to be the Queen's contiguity matrix.}\label{fig:simulations}
\end{figure}

\subsection{Maximum-likelihood estimation}\label{sec:comp_estimation}

Other important functions of the package are the \fct{qml.spARCH} and \fct{qml.SARspARCH} functions, which implement a quasi-maximum-likelihood estimation algorithm (QML). As for the \fct{sim.spARCH} function, many spARCH models are covered in the \fct{qml.spARCH} and \linebreak \fct{qml.SARspARCH} function. Thus, the user needs to specify which particular spARCH model is to be fitted via the argument \code{type}. Moreover, the model for the mean equation is a user-specified \code{formula}, making the use of the estimation functions similar to the use of the common \fct{lm} or \fct{glm} functions.

In general, the estimators exhibited good performances for a variety of error distributions in simulation studies, although the likelihood function was derived under the normality assumption. This is not surprising, as the maximum-likelihood estimators have good properties under mild assumptions for the error processes of a variety of similar spatial econometrics models (cf. \citealt{Lee04,Lee12,Lee10d,Lee10b}). Thus, we refer to the approach as the QML approach, and the name of the estimation functions start with \code{qml} instead of \code{ml}. In the following paragraphs, we start the simulation of one specific sample, which is then used further to illustrate the log-likelihood functions as well as to demonstrate parameter estimation.

Compared to the E-spARCH processes, the likelihood functions of spARCH models are rather flat around the global maximum. This behavior is illustrated for simulated processes in Figure \ref{fig:LL}. The observations for the E-spARCH process have been simulated as follows.
\begin{lstlisting}
R> nblist          <- cell2nb(20, 20, type = "queen")
R> W               <- nb2mat(nblist)
R> y               <- sim.spARCH(n = 20^2, rho = 0.5, alpha = 1, W = W, type = "exp", control = list(seed = 5515))
\end{lstlisting}
To simulate an oriented process, the entries of $\xmat{W}$ above the diagonal must be set to zero and the argument \code{type} must be changed to \code{"gaussian"}, i.e.,
\begin{lstlisting}
R> W[upper.tri(W)] <- 0
R> y2              <- sim.spARCH(n = 20^2, rho = 0.5, alpha = 1, W = W, type = "gaussian", control = list(seed = 5515))
\end{lstlisting}

\begin{figure}
  \centering
  \textbf{spARCH process}\\
  \includegraphics[width=0.32\textwidth]{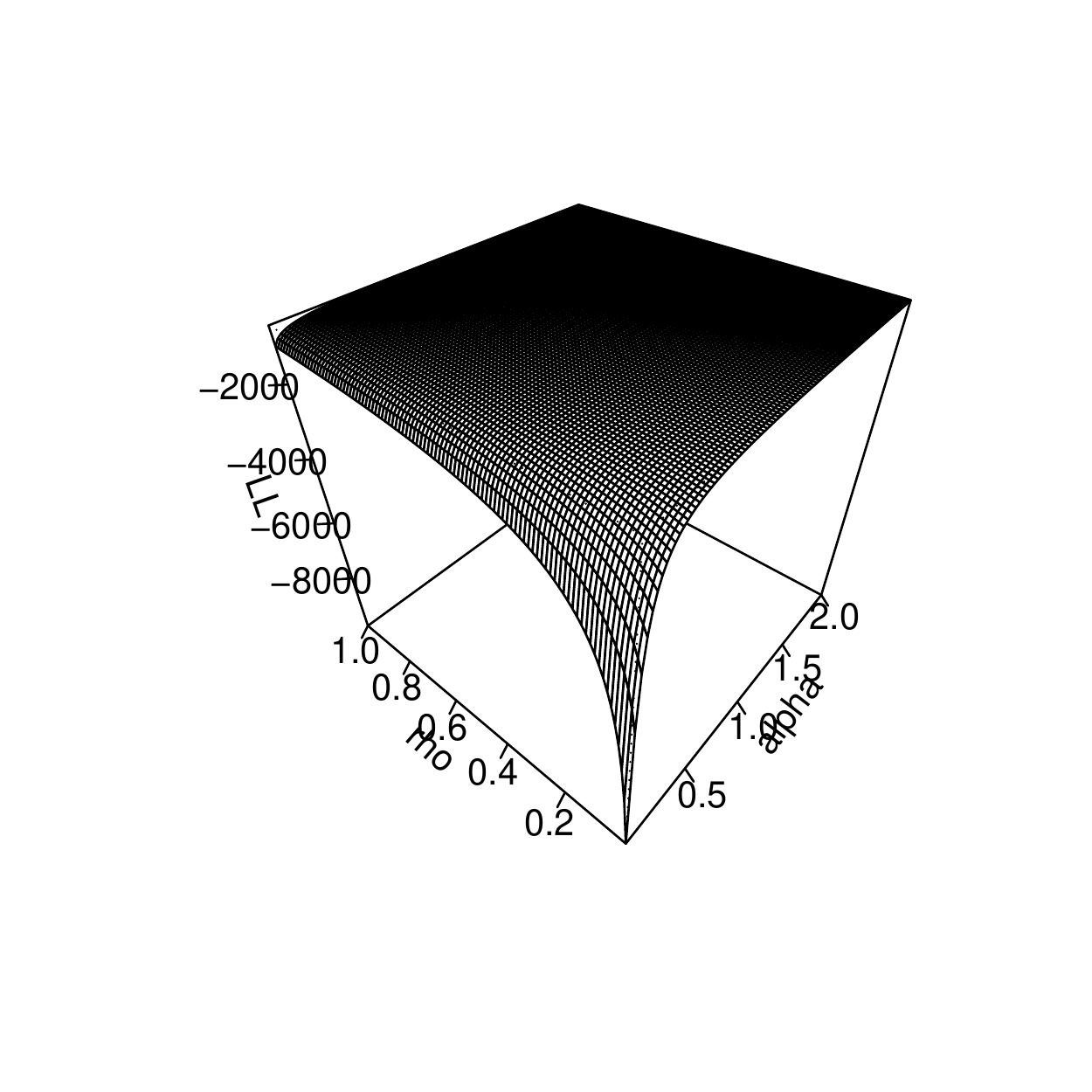}
  \includegraphics[width=0.32\textwidth]{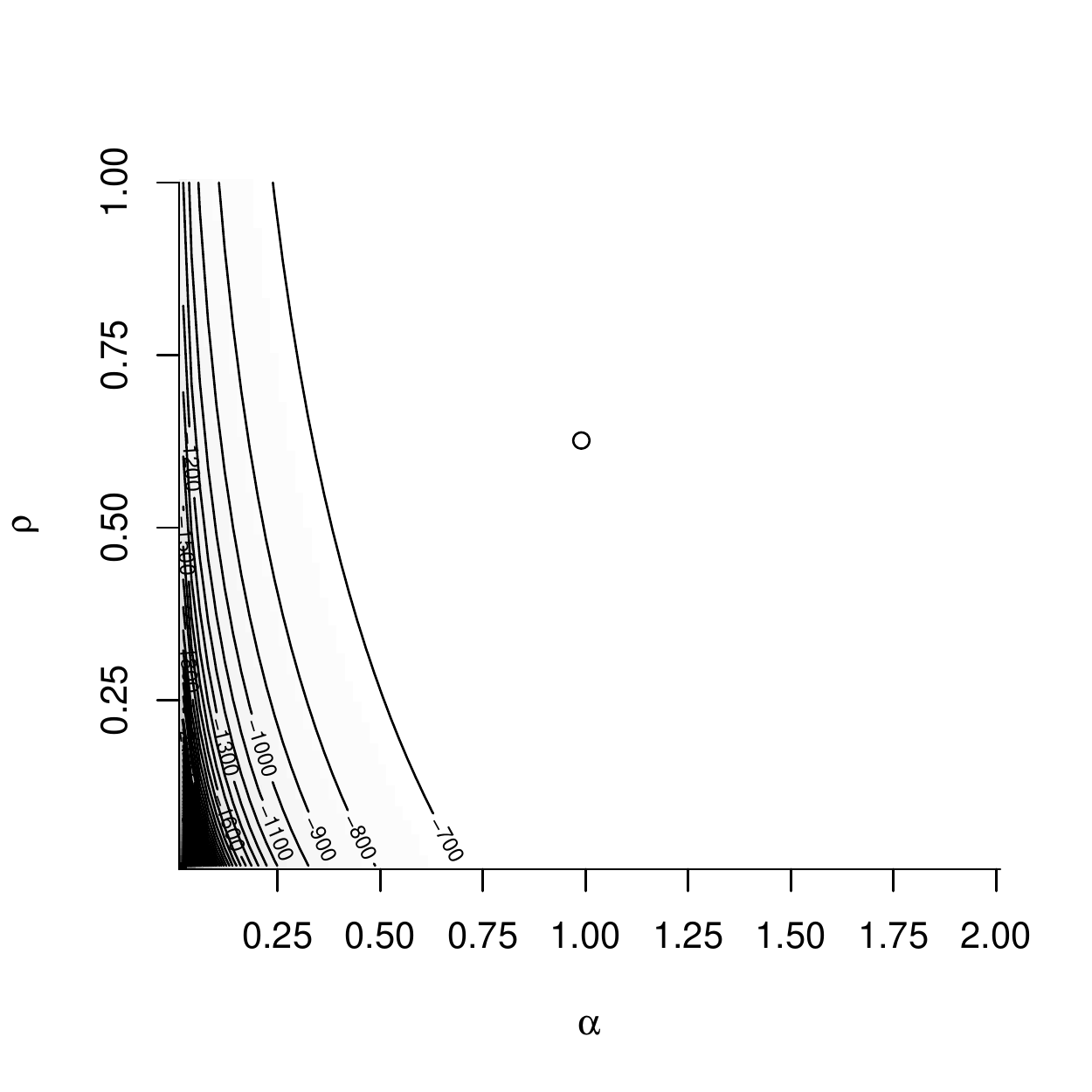}
  \includegraphics[width=0.32\textwidth]{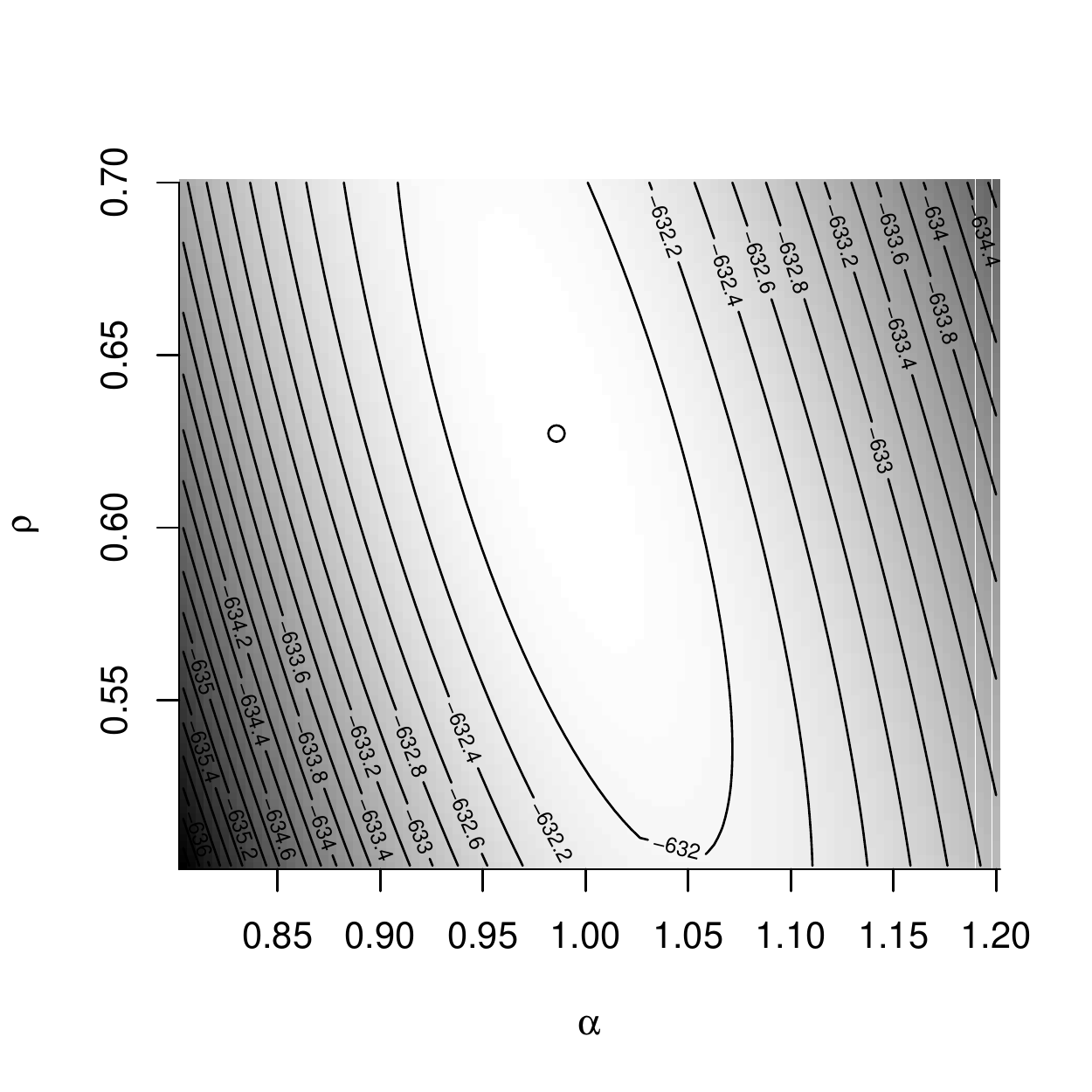}\\
  \textbf{E-spARCH process}\\
  \includegraphics[width=0.32\textwidth]{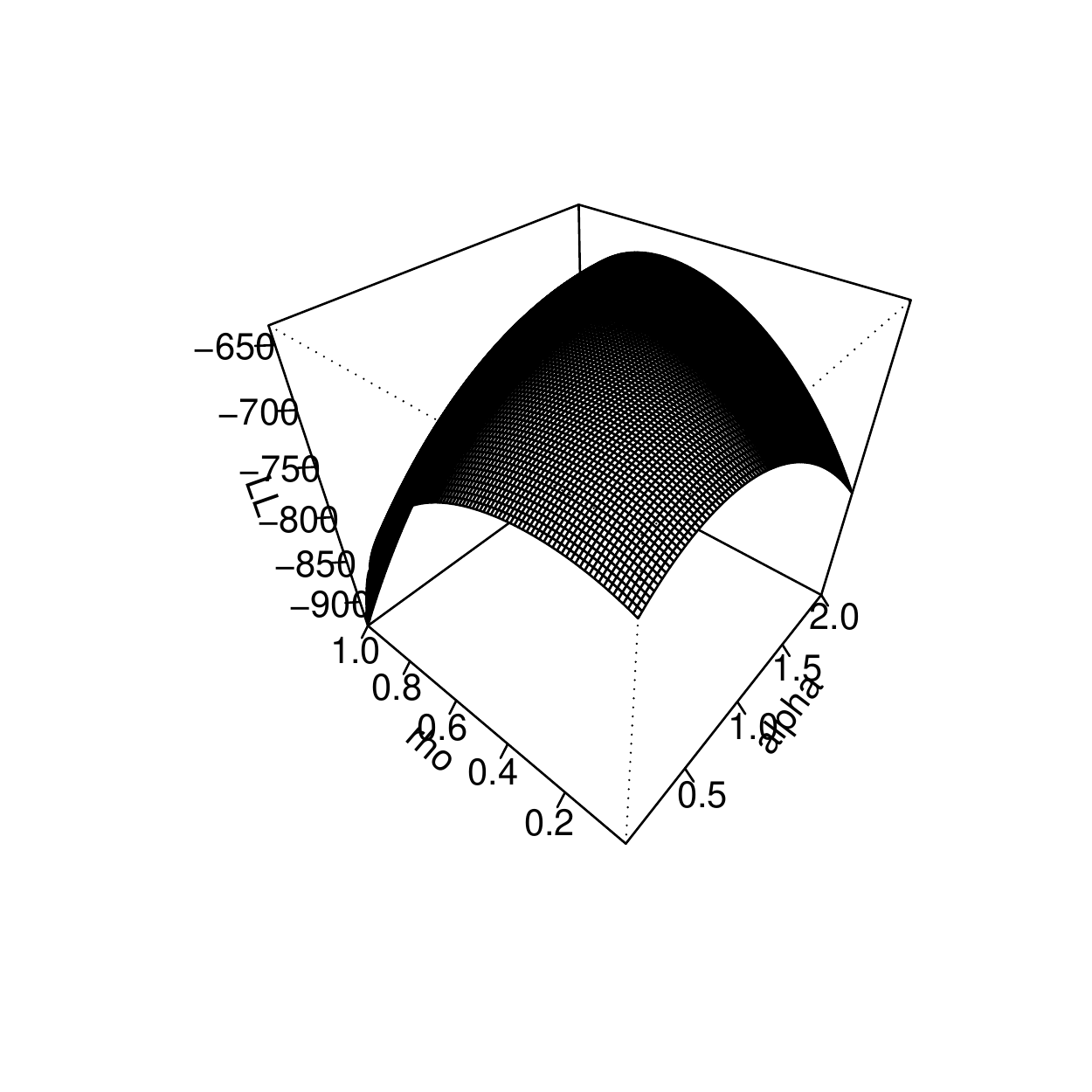}
  \includegraphics[width=0.32\textwidth]{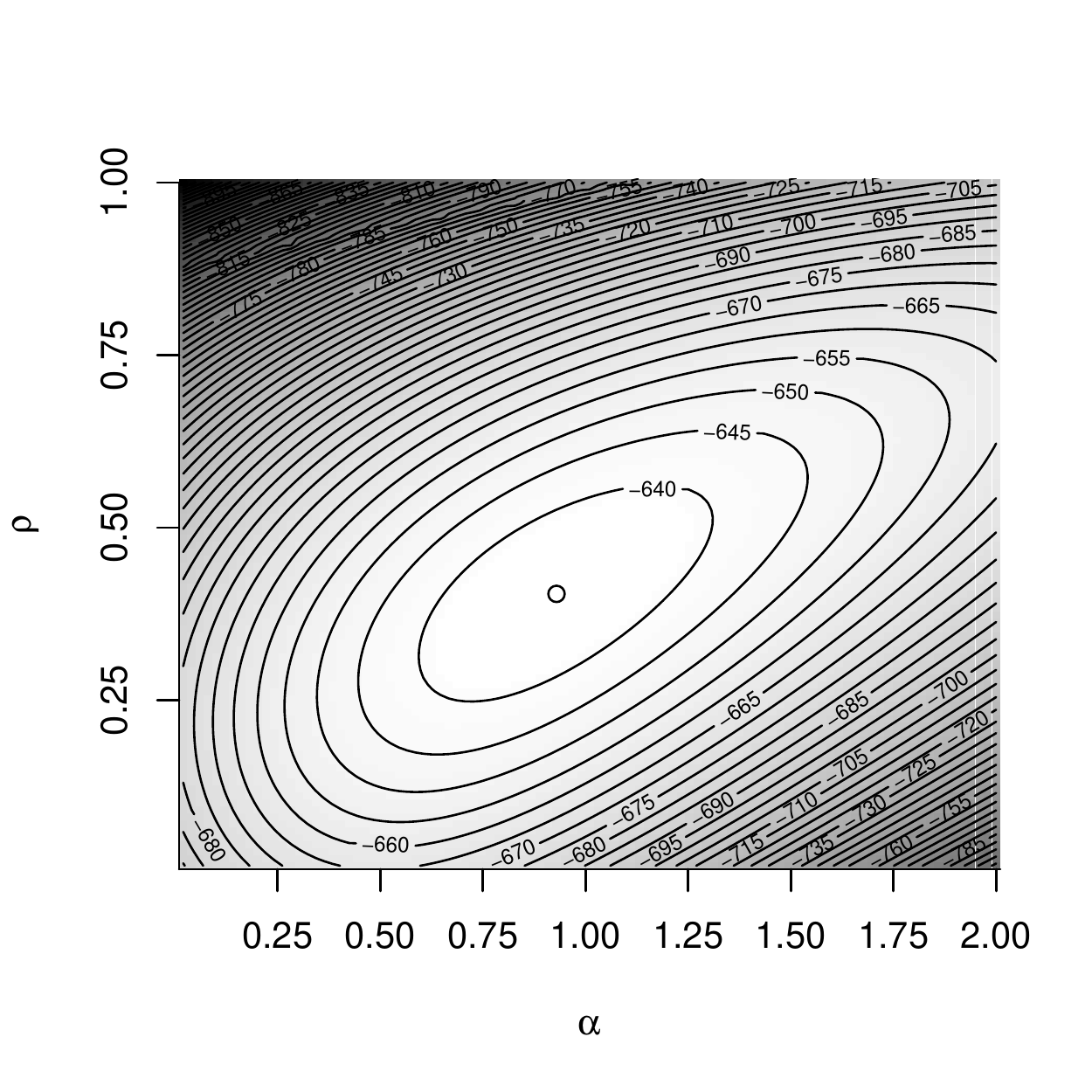}
  \includegraphics[width=0.32\textwidth]{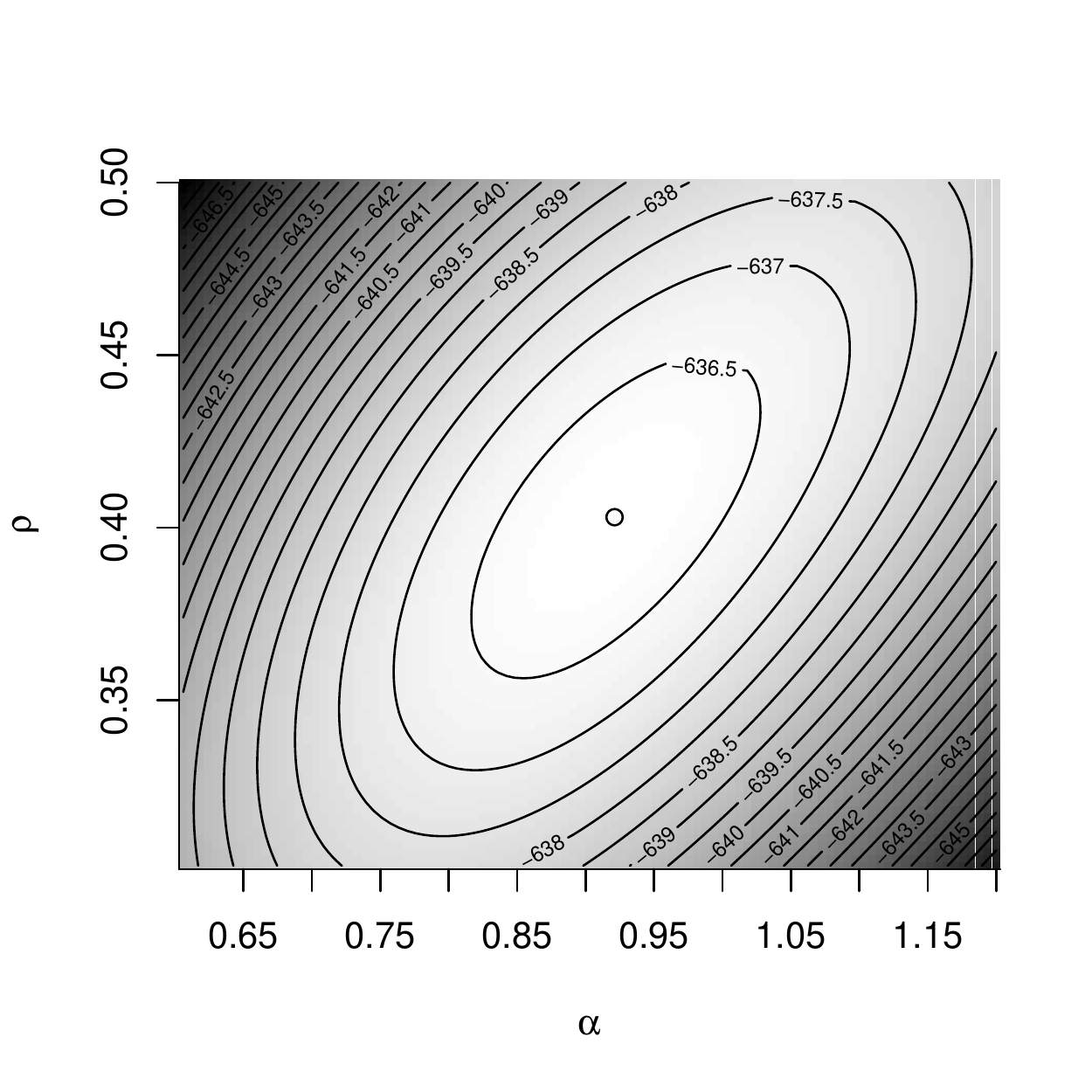}
  \caption{Logarithmic likelihood function.}\label{fig:LL}
\end{figure}

To estimate the parameters of an intercept-free E-spARCH model without any regressors, the formula passed to the function \fct{qml.spARCH} should be specified as \code{y $\sim$ 0}. In addition, a \code{data.frame} can be passed via the \code{data} argument to the \code{qml} functions. Although the likelihood function of a spARCH process is flat, good estimates can be obtained through iterative maximization. \cite{Otto16_arxiv} analyze the performance of the estimators in detail. The algorithm implemented in the packages is based on the \pkg{Rsolnp} package, allowing for both equality and inequality parameter constraints (cf. \citealt{RSolnp}).

The results of the estimation procedure are returned via an object of the class \class{spARCH}, for which we provide additionally several generic functions. First, there is a \fct{summary} function for the \class{spARCH} object. The summary shows all important estimation results, i.e., the parameter estimates, standard errors, test statistics, and asymptotic p-values, including significance stars. The estimation of the above simulated E-spARCH process would return the following results.
\begin{lstlisting}
R> spARCH_object   <- qml.spARCH(y ~ 0, W = W, type = "exp")
R> summary(spARCH_object)
 Call:
qml.spARCH(formula = y ~ 0, W = W, type = "exp")

 Residuals:
      Min.    1st Qu.     Median       Mean    3rd Qu.       Max.
-2.6867629 -0.6197315 -0.0053580 -0.0002615  0.5708346  2.8576621

 Coefficients:
      Estimate Std. Error t value  Pr(>|t|)
alpha 0.919324   0.128544  7.1518 8.564e-13 ***
rho   0.402998   0.056519  7.1304 1.001e-12 ***
---
Signif. codes:  0 *** 0.001 ** 0.01 * 0.05 . 0.1   1

 AIC: 543.01, BIC: 539.01 (Log-Likelihood: -269.51)

 Moran's I (residuals): -0.028568, p-value: 0.31795

 Moran's I (squared residuals): 0.035239, p-value: 0.14479
\end{lstlisting}
The standard errors are estimated as Cramer-Rao bounds from the Hessian matrix of the log-likelihood function. For triangular weighting matrices, the estimators are asymptotically normally distributed (\citealt{Otto16_arxiv}). In addition to the Akaike and Bayesian Schwarz information criteria, the results of Moran's test on the residuals and squared residuals are reported for the spatial autocorrelation of the residuals. However, it is possible to use functions like \fct{AIC} or \fct{BIC}, since there is a \fct{logLik} method for the objects from class \class{spARCH}. Additionally, the fitted values and residuals can be extracted by \fct{fitted} and \fct{residuals}, respectively.

To analyze the residuals, we provide additionally several descriptive plots via the generic \fct{plot} function. The first two plots are produced by \fct{moran.plot} imported from the package \pkg{spdep}. They inspect the spatial autocorrelation of the residuals and the squared residuals. In addition, the error distribution is depicted in the third graphic by a normal Q-Q-plot. The output obtained for the above numerical example is given below and in Figure \ref{fig:plot_out}.
\newpage
\begin{lstlisting}
R> AIC(spARCH_object)
[1] 543.0126
R> BIC(spARCH_object)
[1] 550.9956
R> par(mfcol = c(1,3))
R> plot(spARCH_object)
Reproduce the results as follows:
	 eps <- residuals(x)
	 W <- as.matrix(x$W)
	 moran.plot(eps, mat2listw(W), zero.policy = TRUE,
     xlab = "Residuals", ylab = "Spatially Lagged Residuals")
Reproduce the results as follows:
	 eps <- residuals(x)
	 W <- as.matrix(x$W)
	 moran.plot(eps, mat2listw(W), zero.policy = TRUE,
     xlab = "Residuals", ylab = "Spatially Lagged Residuals")
Reproduce the results as follows:
	 eps <- residuals(x)
	 std_eps <- (eps - mean(eps))/sd(eps)
	 qqnorm(eps, ylab = "Standardized Residuals")
	 qqline(eps)
\end{lstlisting}

\begin{figure}
  \centering
  \includegraphics[width=\textwidth]{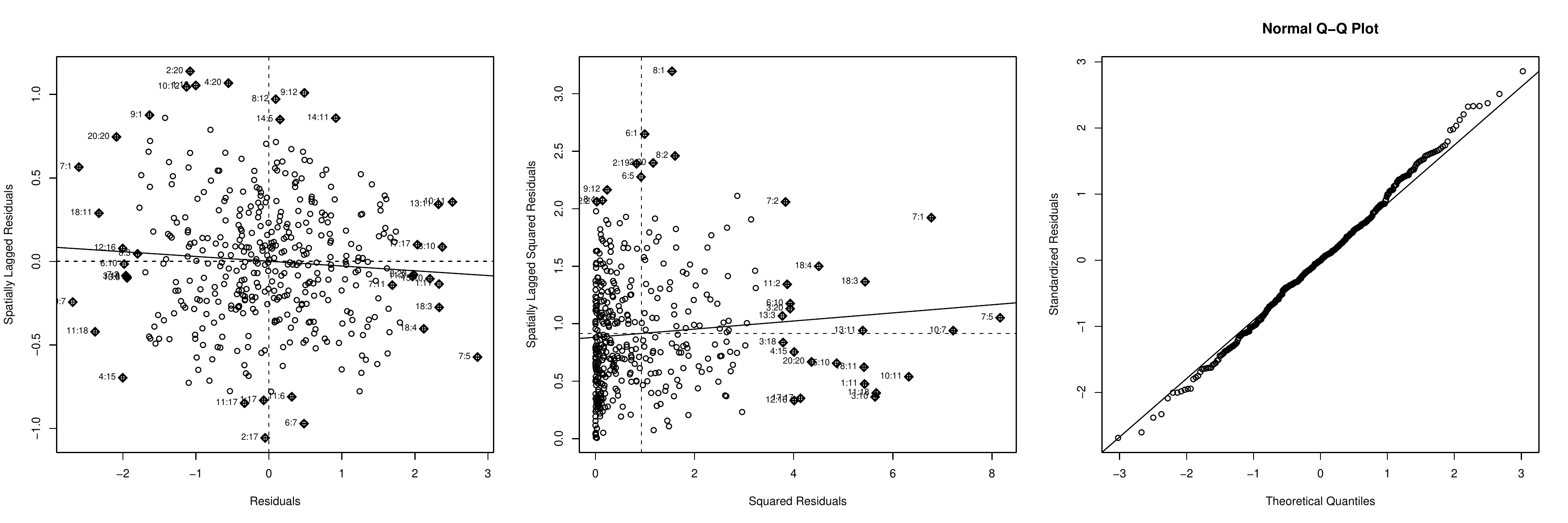}
  \caption{Resulting graphical output of \fct{plot}.}\label{fig:plot_out}
\end{figure}

The mean equation can be specified as \code{formula} for all models, i.e., the spARCH, E-spARCH, and SARspARCH models. Thus, there is a huge variety of possible spatial ARCH models as well as regression models with spARCH residuals which can be fitted by the estimation functions. In addition to linear models of the form \code{y $\sim$ a + b}, more sophisticated models can also be fitted, e.g., models with interactions \code{y $\sim$ a + b:c}, factor models \code{y $\sim$ factor}, polynomial models \code{y $\sim$ poly(a, 3)}, seasonally or regularly varying models of the form \code{y $\sim$ sin(t) + cos(t)} or \code{y $\sim$ sin(long) + cos(long) + sin(lat) + cos(lat)}, and so forth. We also implement an \fct{extractAIC} method for \class{spARCH} objects, such that one might also use \fct{step} for stepwise model selection. Table \ref{table:fittable_models} provides an overview of possible combinations of the arguments \code{formula} and \code{type} and shows the resulting models, which can be fitted by the functions \fct{qml.spARCH} and \fct{qml.SARspARCH}, respectively.

\begin{table}
  \centering
  \begin{tabular}{
  p{0.18\textwidth}
  p{0.145\textwidth}
  p{0.12\textwidth}
  p{0.50\textwidth}}
  \hline
  Function & \code{formula} & \code{type} & Resulting model \\
  \hline
  \fct{qml.spARCH}    & \code{y $\sim$ 0}        & \code{"gaussian"} & spARCH model (see \eqref{eq:spARCHy} and \eqref{eq:h_spARCH})                                                  \\
  \fct{qml.spARCH}    & \code{y $\sim$ 1}        & \code{"gaussian"} & spARCH model with an additional intercept for the mean equation                                                                     \\
  \fct{qml.spARCH}    & \code{y $\sim$ a + b}    & \code{"gaussian"} & Linear Regression with regressors \code{a} and \code{b} and spARCH residuals         \\
  \fct{qml.spARCH}    & \code{y $\sim$ a + b:c}  & \code{"gaussian"} & Linear Regression with more complex expressions and spARCH residuals          \\
  \fct{qml.spARCH}    & \code{y $\sim$ 0}        & \code{"exp"}      & E-spARCH model (see \eqref{eq:spARCHy} and \eqref{eq:EspARCHh})                                                  \\
  \fct{qml.spARCH}    & \code{y $\sim$ 1}        & \code{"exp"}      & E-spARCH model with an additional intercept for the mean equation                                                                      \\
  \fct{qml.spARCH}    & \code{y $\sim$ a + b}    & \code{"exp"}      & Linear Regression with regressors \code{a} and \code{b} and E-spARCH residuals                                       \\
  \fct{qml.spARCH}    & \code{y $\sim$ a + b:c}  & \code{"exp"}      & Linear Regression with more complex expressions and E-spARCH residuals   \\
  \fct{qml.SARspARCH} & \code{y $\sim$ 0}        & \code{"gaussian"} & SAR model without an intercept, but with spARCH residuals  (see \eqref{eq:SARspARCH1} and \eqref{eq:SARspARCH2})                                                      \\
  \fct{qml.SARspARCH} & \code{y $\sim$ 1}        & \code{"gaussian"} & SAR model with an intercept and spARCH residuals                                                                                  \\
  \fct{qml.SARspARCH} & \code{y $\sim$ a + b}    & \code{"gaussian"} & SAR model with an intercept and the regressors \code{a} and \code{b}  and spARCH residuals                                        \\
  \fct{qml.SARspARCH} & \code{y $\sim$ a + b:c}  & \code{"gaussian"} & SAR model with more complex expressions and spARCH residuals    \\
  \fct{qml.SARspARCH} & \code{y $\sim$ 0}        & \code{"exp"}      & SAR model without an intercept, but with E-spARCH residuals  (see \eqref{eq:SARspARCH1} and \eqref{eq:SARspARCH2})                                                      \\
  \fct{qml.SARspARCH} & \code{y $\sim$ 1}        & \code{"exp"}      & SAR model with an intercept and E-spARCH residuals                                                                                  \\
  \fct{qml.SARspARCH} & \code{y $\sim$ a + b}    & \code{"exp"}      & SAR model with an intercept and the regressors \code{a} and \code{b}  plus E-spARCH residuals                                        \\
  \fct{qml.SARspARCH} & \code{y $\sim$ a + b:c}  & \code{"exp"}      & SAR model with more complex expressions and E-spARCH residuals    \\
  \hline
  \end{tabular}
  \caption{Overview of spatial models, which can be fitted by \fct{qml.spARCH} and \fct{qml.SARspARCH}.}\label{table:fittable_models}
\end{table}

\section{Real-data example: prostate cancer incidence rates}\label{sec:application}



Below, the focus is on the incidence rates (2008--2012) for prostate cancer provided by the Centers for Disease Control and Prevention (\citealt{CDC_data}). In particular, we analyze the incidence rates in all counties of several southeastern U.S. states, namely Arkansas, Louisiana, Mississippi, Tennessee, North and South Carolina, Georgia, Alabama, and Florida. This area also covers the counties along the Mississippi River collectively known as ``cancer alley'' (see \citealt{Nitzkin92,Brent10,Berry03}). All rates are age-adjusted to the 2000 U.S. standard population (cf. \citealt{CDC_data}).

As explanatory variables, we included a large set of environmental, climate, behavioral, and health covariates, which might have an influence on incidence rates for prostate cancer. For instance, we consider air pollution, such as $PM_{2.5}$, $PM_{10}$, $SO_2$, $NO_2$, $CO$, $O_3$, and $CH2O$, as potential environmental hazard factors. Moreover, we account for smoking, drinking, sport activities, and further healthcare-related variables as potential influences on the cancer incidence rates. In total, we account for 34 explanatory variables, which were obtained by inverse-distance-kriging from spatial points processes. Most of the variables are correlated, so we performed a factor analysis on 5 subgroups to identify 10 common factors. The factor loadings are summarized in Table \ref{table:factor_loadings}. Eventually, the final explanatory factors were chosen by minimizing the Bayesian information criterion using the generic function \fct{step} as follows.
\begin{lstlisting}
R> out <- step(qml.SARspARCH(formula, B = B, W = W, type = "gaussian"), k = log(length(Y)))
\end{lstlisting}
The \code{formula} object simply defines a linear model between the logarithmic incidence rates and all factors. Further, matrix $\xmat{B}$ describes the predefined spatial dependence structure in the mean equation. For this analysis, $\xmat{B}$ has been chosen as a row-standardized contiguity matrix of the direct neighbors. For the spatial dependence in the spatial ARCH term of the residuals, we also included all neighbors up to order 4. Hence, $\xmat{W}$ is the row-standardized matrix of the sum of the first-, second-, third-, and fourth-lag neighbors.

By minimizing the BIC criterion, the $2^{\text{nd}}$ and $10^{\text{th}}$ factor has been selected. Whereas the $2^{\text{nd}}$ factor has positive loadings mainly for fine particulate matters, $PM_{2.5}$ and $PM_{10}$, the $10^{\text{th}}$ describes the tendency for high blood pressure and cholesterol in the county’s population. However, note that this analysis is based on aggregated data rather than individual patients; hence, the selected factors cannot be interpreted as carcinogenic factors.

Using the generic \fct{summary} for the \class{spARCH} class, the estimated model can be summarized as follows.
\begin{lstlisting}
 Call:
qml.SARspARCH(formula = formula, B = B, W = W, type = "gaussian",
    data = NULL)

 Residuals:
      Min.    1st Qu.     Median       Mean    3rd Qu.       Max.
-0.7492270 -0.1079639 -0.0001509 -0.0005261  0.1121190  0.6404564

 Coefficients:
                        Estimate Std. Error t value  Pr(>|t|)
alpha (spARCH)         0.0203839  0.0042674  4.7766 1.783e-06 ***
rho (spARCH)           0.3782104  0.1309656  2.8879  0.003879 **
lambda (SAR)           0.6768133  0.0356765 18.9708 < 2.2e-16 ***
(Intercept)            1.5388985  0.1702222  9.0405 < 2.2e-16 ***
X_factor_scores[, 2]   0.0192857  0.0069917  2.7584  0.005809 **
X_factor_scores[, 10] -0.0205693  0.0064450 -3.1915  0.001415 **
---
Signif. codes:  0 *** 0.001 ** 0.01 * 0.05 . 0.1   1

 AIC: -1734.2, BIC: -1746.2 (Log-Likelihood: 873.11)

 Moran's I (residuals): -0.022899, p-value: 0.32023

 Moran's I (squared residuals): 0.021409, p-value: 0.00050052
\end{lstlisting}

First, we see that the model has a significant spatial autocorrelation in the mean equation since $\hat{\lambda}$ (\code{lambda (SAR)}) differs significantly from zero. This implies that there are clusters of higher prostate cancer incidence rates and, vice versa, lower incidence rates. Second, the error process shows conditional, autoregressive heteroscedasticity in space, which is captured by the spARCH component of the model, i.e., $\hat{\rho} = 0.378$ and $\hat{\alpha} = 0.020$. This can be interpreted as differences in the local uncertainty of the model. Hence, there are regions where the model predicts the true incidence rates more accurately, and there are regions with a worse fit. This can also be interpreted as local risks coming from unobserved, hidden factors. Note additionally that it is important to account for spatial conditional heteroscedasticity, as the estimates of spatial autoregressive models are biased if the error variance is not homogeneous across space. Inspecting the residuals, one can see that the spatial autocorrelation has been fully captured by the model, as Moran's $I$ of the residuals is close to zero. In contrast, there is a weak spatial dependence in the squared residuals. To inspect the reason for this dependence graphically, the function \fct{plot} can be used to produce the plots shown in Figure \ref{fig:output_emp_model}.

After fitting the model, one also may include further regressors or estimate an intercept-only model via \fct{update}. For illustration, we added the percentage of positive results for a prostate-specific antigen (PSA) test in each county as an additional explanatory variable by
\begin{lstlisting}
R> out2 <- update(out, . ~ . + PSA_test)
\end{lstlisting}
The PSA test is used for prostate cancer screening, meaning that there should definitely be a positive dependence between the PSA test and the incidence rates. In fact, the estimated parameter is positive, and the AIC is lower compared to the previous model. To be precise, the updated parameters are
\begin{lstlisting}
                        Estimate Std. Error t value  Pr(>|t|)
alpha (spARCH)         0.0199281  0.0043105  4.6231  3.78e-06 ***
rho (spARCH)           0.3902185  0.1280266  3.0479 0.0023041 **
lambda (SAR)           0.6643605  0.0366748 18.1149 < 2.2e-16 ***
(Intercept)            1.1349551  0.2301554  4.9313  8.17e-07 ***
X_factor_scores[, 2]   0.0198504  0.0069903  2.8397 0.0045159 **
X_factor_scores[, 10] -0.0224035  0.0065828 -3.4034 0.0006656 ***
PSA_test               0.0095962  0.0042728  2.2459 0.0247125 *
\end{lstlisting}
%
%
%

\begin{figure}
  \centering
  \includegraphics[width=\textwidth]{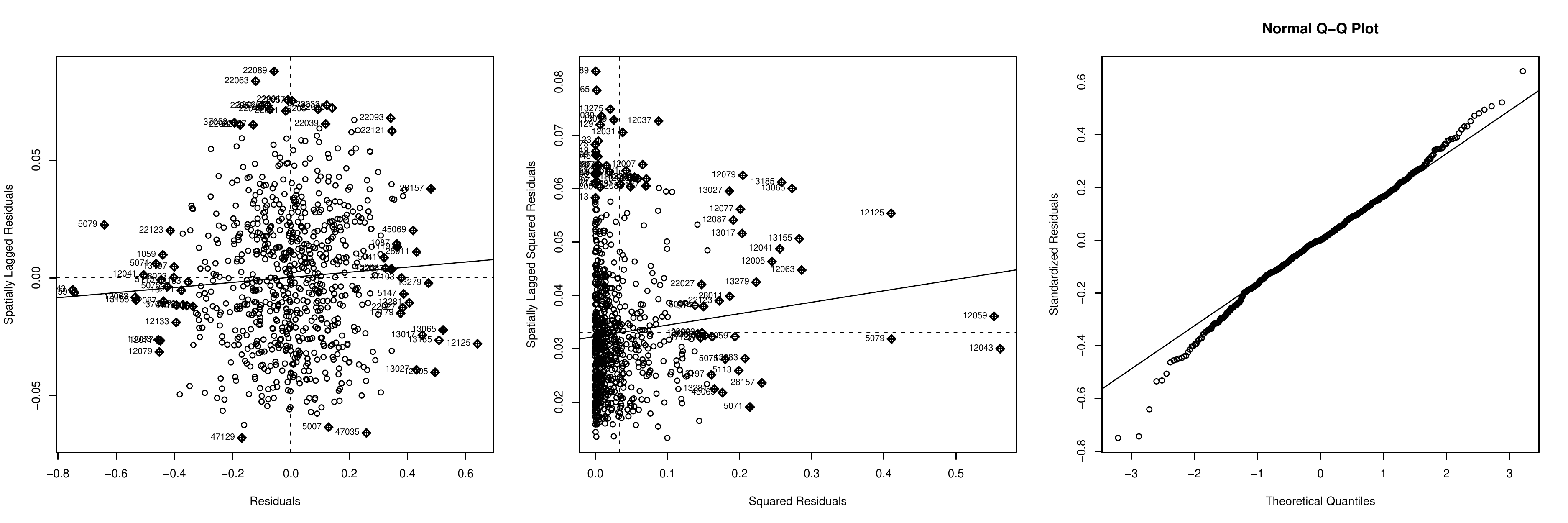}
  \caption{Resulting graphical output of \fct{plot} for the real-data example.}\label{fig:output_emp_model}
\end{figure}

\begin{table}
\centering
\begin{scriptsize}
\begin{tabular}{l rrrrrrrrrr}
  \hline
 & F. 1 & F. 2 & F. 3 & F. 4 & F. 5 & F. 6 & F. 7 & F. 8 & F. 9 & F. 10 \\
  \hline
  $PM_{2.5}$ concentration & 0.69 & 0.72 &  &  &  &  &  &  &  &  \\
  $SO_2$ concentration & 0.33 & -0.03 &  &  &  &  &  &  &  &  \\
  $NO_2$ concentration & 0.13 & -0.12 &  &  &  &  &  &  &  &  \\
  $CO$ concentration & 0.31 & 0.05 &  &  &  &  &  &  &  &  \\
  $PM_{10}$ concentration & 0.07 & 0.44 &  &  &  &  &  &  &  &  \\
  $O_3$ concentration & 1.00 & -0.02 &  &  &  &  &  &  &  &  \\
  Solar radiation &  &  & 0.60 & 0.44 &  &  &  &  &  &  \\
  Precipitation &  &  & -0.08 & -0.26 &  &  &  &  &  &  \\
  Outdoor temperature &  &  & 1.00 & -0.05 &  &  &  &  &  &  \\
  Temperature differences &  &  & 0.32 & 0.94 &  &  &  &  &  &  \\
  Ambient maximal temperature &  &  & 0.08 & -0.39 &  &  &  &  &  &  \\
  $CH_2O$ & -0.23 & 0.32 &  &  &  &  &  &  &  &  \\
  Percentage of current smokers &  &  &  &  & 0.47 & -0.85 &  &  &  &  \\
  Percentage of former smokers &  &  &  &  & 0.92 & 0.37 &  &  &  &  \\
  Smoke some days &  &  &  &  & -0.07 & -0.62 &  &  &  &  \\
  Never smoked &  &  &  &  & -0.96 & 0.25 &  &  &  &  \\
  Aerobic activity  &  &  &  &  &  &  & -0.05 & 0.58 &  &  \\
  Exercises &  &  &  &  &  &  & 0.41 & 0.33 &  &  \\
  Physical activity index &  &  &  &  &  &  & -0.09 & 0.99 &  &  \\
  Alcohol consumption &  &  &  &  & 0.04 & 0.62 &  &  &  &  \\
  Binge drinking &  &  &  &  & 0.07 & 0.44 &  &  &  &  \\
  Heavy drinking &  &  &  &  & 0.43 & 0.02 &  &  &  &  \\
  High cholesterol  &  &  &  &  &  &  &  &  & 0.00 & 1.00 \\
  Cholesterol checked &  &  &  &  &  &  &  &  & 0.55 & 0.00 \\
  Overweight (BMI 25.0-29.9) &  &  &  &  &  &  &  &  & 0.99 & 0.09 \\
  Obese (BMI 30.0 - 99.8) &  &  &  &  &  &  &  &  & -0.75 & 0.01 \\
  Blood stool test &  &  &  &  &  &  &  &  & 0.56 & -0.23 \\
  Sigmoidoscopy &  &  &  &  &  &  &  &  & 0.14 & -0.16 \\
  High blood pressure &  &  &  &  &  &  &  &  & 0.03 & 0.79 \\
  Flu shot &  &  &  &  &  &  & 0.81 & -0.13 &  &  \\
  Pneumonia vaccination &  &  &  &  &  &  & 0.51 & -0.26 &  &  \\
  Health care coverage &  &  &  &  &  &  & 0.58 & 0.18 &  &  \\
  Seatbelt use &  &  &  &  &  &  & -0.58 & 0.10 &  &  \\
   \hline
\end{tabular}
\end{scriptsize}
\caption{Overview of all included regressors and factor loading for the 10 common factors. The regressors were divided into 5 subgroups to allow for distinctions between the factors.}\label{table:factor_loadings}
\end{table}

%

%
%
%

\section{Summary and discussion}\label{sec:summary}

This paper examines spatial models for autoregressive conditional heteroscedasticity. In contrast to previously proposed spatial GARCH models, these models allow for instantaneous autoregressive dependence in the second conditional moments. Previous approaches only allowed for spatial dependence in the first temporal lag. However, these models are also captured by the spatial ARCH approach, since temporal dependence can be included by appropriate choices of the weighting matrix. In addition to discussing previously proposed models, we introduced a novel spatial exponential ARCH model, for which the probability density has been derived and maximum-likelihood estimators discussed.

In addition to this theoretical model, we focus on the computational implementation of all considered spatial ARCH models in the \proglang{R}-package \pkg{spGARCH}. In particular, the simulation and estimation has been demonstrated. Regarding maximum-likelihood estimation, a broad range of spatial models are implemented in the package. Furthermore, the spatial weights matrices, as well as the mean model, can easily be specified by the user, providing a flexible and easy-to-use tool for spatial ARCH models. All estimation functions return an object for class \class{spARCH}, for which several generic functions are provided, such as \fct{summary}, \fct{plot}, and \fct{AIC}. This setup also allows the use of the \proglang{R}-base functions, such as \fct{step} for stepwise model selection or \fct{update} for updating the results of different mean models. Eventually, the use of these functions are demonstrated by an empirical example, namely county-level incidence rates of prostate cancer.

In the future, the package should be extended for further spatial ARCH-type models. Along this vein, a class for model specifications should be added alongside the actual implementations via arguments for the fitting functions. In that way, the package can be aligned to common time series ARCH packages, such as the \pkg{rugarch} package. Furthermore, the package could benefit from robust estimation methods, another focus for future research.

\section{Appendix}

\begin{proof}[Proof of Theorem \ref{th:EspARCH}]
For this definition of $g_b$, one could rewrite $\ln \xvec{h}$ as
\begin{equation}\label{eq:EspARCHh2}
\ln \xvec{h}_E = \xmat{S} \left(\alpha\xvec{1} + \rho b \xmat{W} \ln|\xvec{Y}|\right)
\end{equation}
with
\begin{equation*}
\xmat{S} = (s_{ij})_{i,j = 1, \ldots, n} = \left(\xmat{I} + \frac{1}{2} \rho b \xmat{W}\right)^{-1} \, .
\end{equation*}
Since $w_{ij} \geq 0$ for all $i,j = 1, \ldots, n$, $\xmat{W}$ is positive definite and it holds that
\begin{equation*}
\det \left(\xmat{I} + \frac{1}{2} \rho b \xmat{W}\right) \geq 1 + \frac{1}{2} \rho b \det(\xmat{W}) > 0 \, .
\end{equation*}
Thus, the relation between $Y(\xvec{s}_1), \ldots, Y(\xvec{s}_n)$ and $\varepsilon(\xvec{s}_1), \ldots, \varepsilon(\xvec{s}_n)$ is given by \eqref{eq:spARCHy} and \eqref{eq:EspARCHh2}.
\end{proof}

\begin{proof}[Proof of Corollary \ref{cor:EspARCH}]
For $\rho \geq 0$, $b \geq 0$, and $w_{ij} \geq 0$ for all $i,j$, the inverse
\begin{equation*}
\xmat{S} = (s_{ij})_{i,j = 1, \ldots, n} = \left(\xmat{I} + \frac{1}{2} \rho b \xmat{W}\right)^{-1} \, .
\end{equation*}
is a non-negative matrix. Thus,
\begin{equation*}
\ln \xvec{h}_E = \xmat{S} \left(\alpha\xvec{1} + \rho b \xmat{W} \ln|\xvec{Y}|\right)
\end{equation*}
is positive for $\alpha > 0$.
\end{proof}

\end{document}